\newtheorem{proposition}{Proposition}
\newtheorem{assumption}{Assumptions}
\newcommand{\specificthanks}[1]{§}
\title{Trait-structured chemotaxis: Exploring ligand-receptor dynamics and travelling wave properties in a Keller-Segel model}
\author{Viktoria Freingruber  
 \thanks{Delft Institute of Applied Mathematics, Delft University of Technology, Mekelweg 4, Delft, 2628 CD, The Netherlands (v.e.freingruber@tudelft.nl).} \textsuperscript{, \specificthanks{4}}\, Tommaso Lorenzi \thanks{Department of Mathematical Sciences “G. L. Lagrange”, Politecnico di Torino, Corso Duca degli Abruzzi, 24, Torino, 10129, Italy
 (tommaso.lorenzi@polito.it).}  \, Kevin J. Painter \thanks{Dipartimento Interateneo di Scienze, Progetto e Politiche del Territorio, Politecnico di Torino, Viale Pier Andrea Mattioli, 39, Torino, 10125, Italy (kevin.painter@polito.it).} \, Mariya Ptashnyk \thanks{Department of Mathematics, The Maxwell Institute for Mathematical Sciences, Heriot-Watt University, Edinburgh EH14 4AS, Scotland, UK (m.ptashnyk@hw.ac.uk).}}
\date{}
\begin{document}

\maketitle

\begin{abstract}
A novel trait-structured Keller-Segel model that explores the dynamics of a migrating cell population guided by chemotaxis in response to an external ligand concentration is derived and analysed. Unlike traditional Keller-Segel models, this framework introduces an explicit representation of ligand-receptor bindings on the cell membrane, where the percentage of occupied receptors constitutes the trait that influences cellular phenotype. The model posits that the cell's phenotypic state directly modulates its capacity for chemotaxis and proliferation, governed by a trade-off due to a finite energy budget: cells highly proficient in chemotaxis exhibit lower proliferation rates, while more proliferative cells show diminished chemotactic abilities. The model is  derived from the principles of a biased random walk, resulting in a system of two non-local partial differential equations, describing the densities of both cells and ligands. Using a Hopf-Cole transformation, we derive an equation that characterises the distribution of cellular traits within travelling wave solutions for the total cell density, allowing us to uncover the monotonicity properties of these waves. Numerical investigations are conducted to examine the model’s behaviour across various biological scenarios, providing insights into the complex interplay between chemotaxis, proliferation, and phenotypic diversity in migrating cell populations.

\end{abstract}

\textbf{Keywords:} Chemotaxis, Structured population dynamics, Travelling waves, Non-local partial differential equations

\textbf{Mathematics Subject Classification:} 92C17, 35B40, 35R99
\section{Introduction}

Chemotaxis refers to the guidance of a cell or organism via a response to the gradient of a chemical concentration, and has been studied intensively in bacteria, cells, and animals through a combination of experimental and theoretical approaches. Following its introduction in the early 1970s, the Keller-Segel model has become the preeminent modelling tool for representing chemotaxis phenomena at a macroscopic level. The Keller-Segel model was inspired by observations of macroscopic organisation within two biological systems: the self-organisation of dispersed amoebae during \textit{Dictyostelium discoideum} mound formation \cite{bonner2015cellular, keller1970initiation}, and the formation of travelling bands within \textit{E.coli} populations \cite{adler1966chemotaxis,keller1971traveling}. The essential features of these two phenomena can be captured within a system of partial differential equations (PDE) for a (cell) population density, $p(x,t)$, that responds to a chemoattractant, $c(x,t)$, of the form
\begin{equation} \label{eq:originalKS}
\begin{split}
    \partial_t p &= \nabla \cdot \left[ D(p,c) \nabla p - A (p,c) p \nabla c \right] + k_1 (p,c), \\
    \partial_t c &= d \Delta c + k_2(p,c),
\end{split}
\end{equation}
for $x \in \mathbb{R}^n, t\geq 0$, and with appropriate initial conditions. In \eqref{eq:originalKS} $D(p,c)$ is the diffusion coefficient of the cells,  $A (p,c)$ is the chemotactic sensitivity of cells to the attractant gradient, $k_1(p,c)$ describes population growth and death, $k_2(p,c)$ describes attractant kinetics, and $d$ is the attractant diffusion coefficient. The above equations, along with their numerous variations, have been subject to intense research, both in terms of their mathematical properties, e.g. see \cite{arumugam2021keller,bellomo2015toward, hillen2009user, horstmann20031970I, horstmann20031970II}, and their application to describe the organisation of different populations \cite{painter2019mathematical}. 

The specific form of system~\eqref{eq:originalKS} is the result of a number of simplifying assumptions, but the extent to which these remain reasonable will vary from process to process. Chemotaxis responses in cells often involve a mechanism in which a freely diffusing extracellular ligand (the chemoattractant) binds to a cell surface receptor, thereby modulating the internal signalling pathways that control motility. The simplest assumption is to ignore these complexities: The response of cells is according to the local ligand concentration gradient, whereas the ligand kinetics are based on its production and/or decay, but any binding of ligand to cell surface receptors is omitted. A slightly more involved approach is to account for ligand-receptor binding within the chemotactic response -- by assuming that the chemotactic response depends on differences in receptor occupancy – but subsequently assume fast binding and unbinding (relative to other timescales) so that a quasi-steady-state approximation can be applied. As a result, the chemotactic sensitivity functions depend on the concentration of the ligand (e.g. the so-called receptor law), but again any receptor binding consequences are often excluded from chemical kinetics \cite{ hillen2009user, segel1977theoretical}.

What if receptor binding and unbinding occur on slower timescales? While receptor-ligand residence times can be tricky to determine, at a general level these can be highly variable and vary from an order of seconds to an order of hours \cite{tummino2008residence}. If longer residence times are the case, there are several potential consequences. First, if the extracellular ligand is available at low concentration levels, then the level of freely available ligand could become significantly reduced as it becomes increasingly sequestered to the cell surface. Second, since cells are migrating, the ligand attached to the cell surface will be transported, potentially released at a point distant from where it was bound. Third, the cell population will naturally become structured according to the amount of ligand attached to its surface. As many chemoattractants also serve other functions, for example, acting as growth factors or nutrients, the amount of bound ligand may also regulate the phenotype of a cell, e.g. triggering both motility and proliferative responses.

Previous research on trait-structured models where the evolution equation for the population density is described by non-local diffusion-(advection-)reaction equations includes, among others, \cite{arnold2012existence, benichou2012front, berestycki2015existence, bouin2014travelling, bouin2012invasion, bouin2017super, domschke2017structured, engwer2017structured, hodgkinson2018computational, lorenzi2022invasion, macfarlane2022individual, turanova2015model}, for a review see \cite{lorenzi2024phenotype}. In particular, work on trait-structured Keller-Segel-type models has been undertaken in \cite{lorenzi2022trade, mattingly2022collective}. Our study is different, as the traits are directly linked to membrane occupancy, which follows from biochemical membrane attachment processes.

In this paper, we develop a comprehensive framework that explicitly integrates receptor-ligand binding dynamics into the modelling of cellular structuring, responses, and ligand kinetics. Our approach involves deriving evolution equations for a continuously trait-structured cell population, along with the concentration of a freely diffusing ligand that functions as both a chemoattractant and, potentially, a nutrient. The trait in our model is defined as the percentage of membrane receptors occupied by the ligand, with the ``membrane occupancy" directly influencing the cell's phenotype and determining its capacity for division or chemotaxis. Cells dynamically alter their trait by binding or releasing ligand molecules to or from their receptors, thereby sequestering ligands from the environment and redistributing them as they migrate. We formulate an individual-based on-lattice branching biased random walk model, which we then coarse-grain to derive a corresponding system of coupled non-local PDEs of the Keller-Segel type.

The classical Keller-Segel model has been extensively studied in the context of invading waves. However, our focus is to illuminate the structuring of a heterogeneous cell population within such waves and to explore how intra-population variability influences the properties of these travelling waves. Addressing these questions could have significant biological implications, particularly in contexts such as metastatic cancer. For example, does a more migratory cell type invariably lead at the forefront of an invading wave? Moreover, could a few migratory cells lag behind, subsequently proliferating and, when re-exposed to an attractant, initiate a new wave of invasion?

The paper is organised as follows: In Section~\ref{sec:Micromodel} we introduce an individual-based on-lattice biased random walk model in the space-trait domain and formulate assumptions on the mobility and reaction terms. In Section~\ref{sec:Macromodel} the corresponding PDE model is presented after it is derived using formal asymptotic techniques. The main results of formal travelling wave analysis are stated in Section~\ref{sec:TW}, with numerical results following in Section~\ref{sec:numres}. We summarise and discuss our results in Section~\ref{sec:discussion}.

\begin{figure}
\centering
\begin{subfigure}{0.4\textwidth}
    \includegraphics[width=\textwidth]{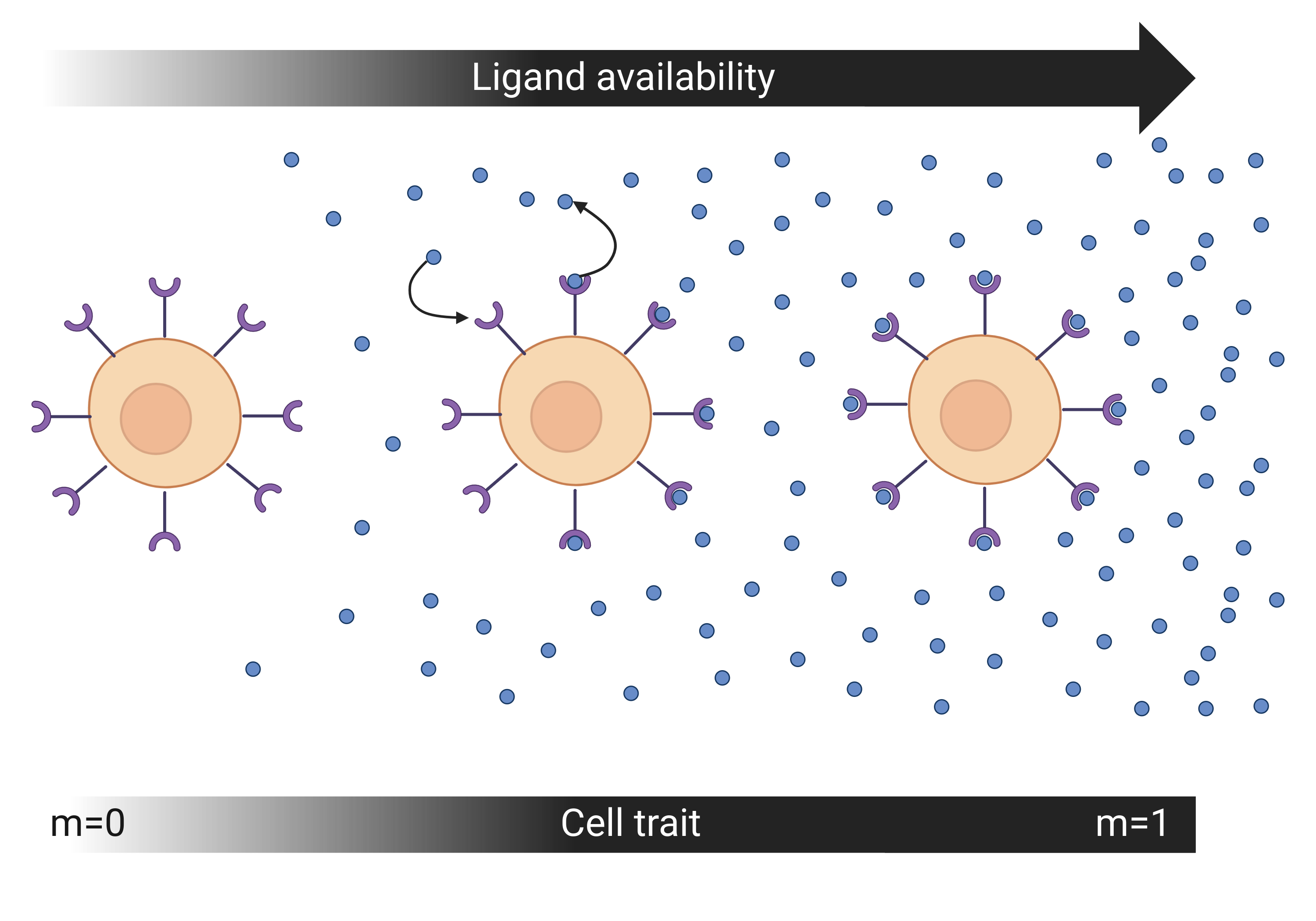}
    \caption{}
    \label{fig:celltrait}
\end{subfigure}
\hfill
\begin{subfigure}{0.5\textwidth}
    \includegraphics[width=\textwidth]{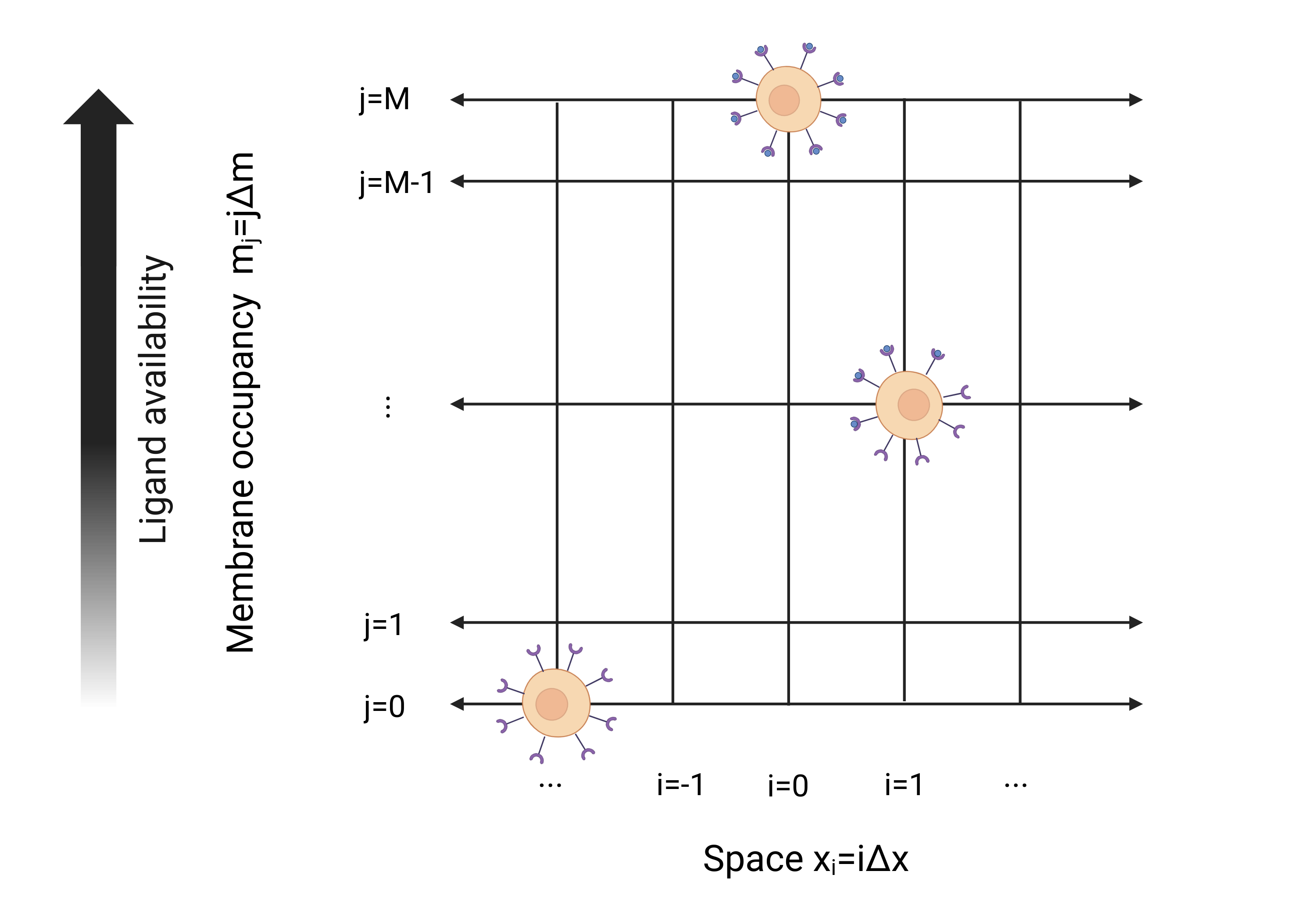}
    \caption{}
\end{subfigure}
\caption{(a) Cells change their trait by attaching or detaching ligands to or from their membrane receptors; (b) Biased random walk of cells as point-particles in space (displayed  on the horizontal axis) and trait (displayed  on the vertical axis)}
\label{Ch2:Fig:schematic_randomwalk}
\end{figure}

\begin{figure}
\centering

    \includegraphics[width=0.55\textwidth]{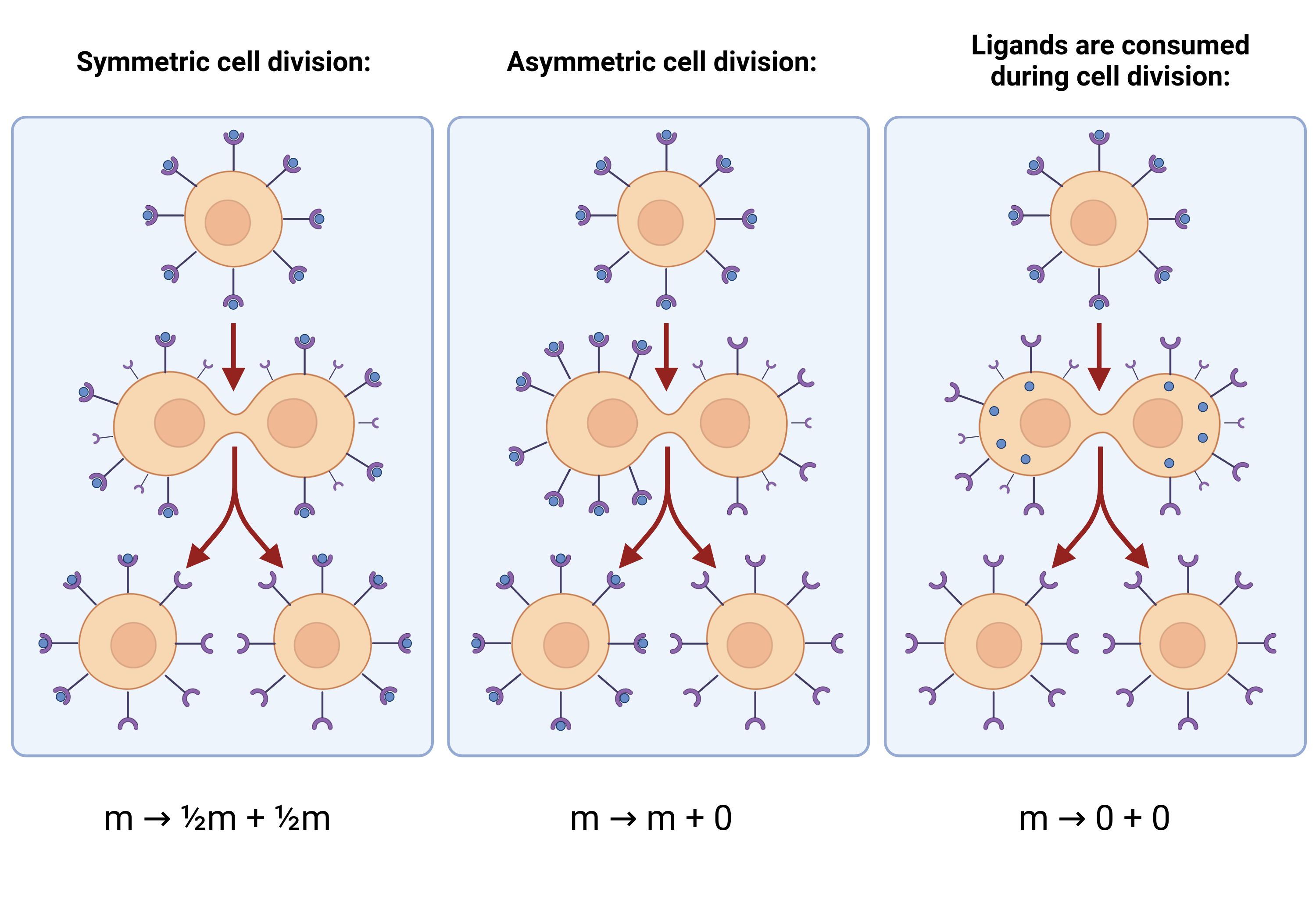}

\caption{ Different division processes of cells in terms of trait inheritance: Symmetric division, Asymmetric division, Consumptive division}
\label{fig:schematic_division}
\end{figure}

\section{Individual-based on-lattice model}
\label{sec:Micromodel}

Numerous derivations of continuous Keller-Segel-type models have been made from an underlying random walk or an individual-based model, including a position jump random walk on a lattice \cite{stevens1997aggregation}, kinetic transport equations for a velocity jump process \cite{othmer2002diffusion}, and a stochastic interaction particle system \cite{stevens2000derivation}. In the present work, our intention is to explicitly account for receptor-ligand binding processes, such that a (chemotactic) cell population becomes structured according to its level of receptor occupancy. To this end, we adopt a branching position-jump random walk model related to those considered within a number of recent studies \cite{ chisholm2016evolutionary,lorenzi2023derivation, macfarlane2022individual}. Specifically, we consider an individual-based on-lattice random walk model, in which single (migrating and dividing) cells move between adjacent sites on a 2D lattice, where one dimension represents position in space and the other represents membrane occupancy, see Figure~\ref{Ch2:Fig:schematic_randomwalk}. We distinguish between free and bound ligands: free ligands diffuse through space, but can become bound to the surface of cells and are hence removed from the cohort of free ligands. Vice versa, bound ligands can detach from the cell membrane and become free ligands again. 

First, we discretise each of time, space and membrane occupancy state as
\begin{equation*}
\begin{split}
 t_k &= k \Delta t \in \mathbb{R}^+, \, \text{ for } \;  k \in \mathbb{N}_0, \; \Delta t \in \mathbb{R}^+,  \\
 x_i &= i \Delta x \in \mathbb{R}, \, i \in \mathbb{Z}, \; \Delta x \in \mathbb{R}^+, \\
 m_j &= j \Delta m \in [0,1], \, 0 \leq j \leq M, M \in \mathbb{N}, \; \Delta m = 1/M.
 \end{split}
\end{equation*} 
Here, $\Delta t, \Delta x, \Delta m$ are the discretisation steps for time, space, and membrane occupancy, respectively.

Cells at position $x_i$ with membrane occupancy $m_j$ at time $t_k$ are regarded as point-particles located at the lattice coordinate $(x_i,m_j)$ and at time $t_k$. We denote the number of cells at $(x_i,m_j)$ at time $t_k$ as $N_k^{i,j}$. We also assume that a certain amount of free (i.e. unbound) ligand can occupy the small region around lattice point $x_i$ at time $t_k$, and denote by $C^i_k$ the number of free ligand molecules associated with lattice site $x_i$ at time $t_k$.

The cell population density and corresponding total cell density will be denoted as
\begin{equation*}
p(x_i,t_k,m_j) := \frac{N_k^{ij}}{\Delta x \Delta m}, \quad P(x_i,t_k) := \Delta m \sum_{j=0}^M p(x_i,t_k,m_j).
\end{equation*}
Similarly, the free ligand density (or concentration)  will be denoted as
\begin{equation*}
c(x_i,t_k) := \frac{C^i_k}{\Delta x}.
\end{equation*}
Where convenient, we will further abbreviate these expressions as
\begin{equation*}
    p^{i,j}_k := p(x_i,t_k,m_j), \quad c^i_k := c(x_i,t_k), \quad P^i_k := P(x_i,t_k).
\end{equation*}

\textbf{Movement of cells.}
For the movement of cells through space, we assume a (positive) chemotactic-type movement: a biased movement towards adjacent positions with higher levels of free ligand. Specifically, we assume that the probabilities of changing position depend on a term representing unbiased movement $D(m_j)$ and a term representing directed migration $A(m_j) (c^{i\pm 1}_k - c^i_k)$. The latter term depends on the difference in ligand concentration between the target position and current position, and follows a standard approach to incorporate chemotactic behaviour in a simple and intuitive way \cite{stevens1997aggregation}. Note that the coefficients $D(m_j)$ and $A(m_j)$ are allowed to depend on the level of membrane occupancy, which allows the degree of ligand binding to directly impact cell movement. The probability of a cell with trait $m_j$ moving from location $x_i$ to $x_{i\pm 1}$ between time $t_k$ and $t_{k+1}$ is defined as
\begin{equation*}
\mathcal{P}^{i \to i \pm 1, j}_k = \frac{\Delta t }{\Delta x ^2} D(m_j) +\frac{\Delta t }{2 \Delta x ^2} A(m_j) \left( c^{i \pm 1}_k- c^i_k  \right)_+.
\end{equation*}
The probability of not moving between $t_k$ and $t_{k+1}$ is then
\begin{equation*}
\mathcal{P}^{i\to i,j}_k = 1 - \mathcal{P}^{i\to i+1,j}_k - \mathcal{P}^{i\to i-1,j}_k.
\end{equation*}
Note that the time step $\Delta t$ is assumed to be sufficiently small such that all probabilities are between $0$ and $1$.

\textbf{Attachment/ detachment of ligands.}
In our model we consider the percentage of occupied membrane receptors as an independent variable that can take values in $[0,1]$, where the boundaries of the interval represent an unoccupied and a fully occupied membrane, respectively. Note that as a simplification we will assume that there is a fixed  constant total number of receptors at the surface of each cell, so that the only way receptor occupancy changes is through ligand binding or unbinding from receptors. 
We assume that the binding of the ligand [c] to the receptor [R] follows the simple reversible reaction \ce{ [c] + [R]  <=>[K_1][K_2] [cR] }, where we denote by $K_1$ and $K_2$ the rate of attachment and detachment, respectively. 

Applying a Law of Mass Action, the probability that a cell in state $m_j$ switches to state $m_{j+1}$ between time $t_k$ and $t_{k+1}$ by attaching ligand to its membrane receptors is
\begin{equation}
\label{eq:Pattach}
\mathcal{P}^{i,j \to j+1}_{k,\text{attach}} =  \frac{\Delta t}{\Delta m } K_1 c^i_k \omega_1(m_j),
\end{equation}
where $\omega_1(m_j) \in C([0,1])$ describes the ability of a cell with occupancy $m_j$ to attach ligands. In particular, to indicate that cells with a fully occupied membrane cannot attach ligand, we have $\omega_1(1)=0$. By similar reasoning, the probability of changing from state $m_j$ to state $m_{j-1}$ between time $t_k$ and $t_{k+1}$ through ligand detachment is 
\begin{equation}
\label{eq:Pdetach}
\mathcal{P}^{j \to j-1}_{\text{detach}} = 
\frac{\Delta t}{\Delta m } K_2 \omega_2(m_j),
\end{equation}
where $\omega_2(m_j) \in C([0,1])$ represents the ability of a cell with occupancy $m_j$ to release bound ligand into the environment. We set $\omega_2(0) = 0 $, as a cell with an unoccupied membrane receptor cannot release ligand. As a consequence of \eqref{eq:Pattach} and \eqref{eq:Pdetach}, the probability of not changing membrane occupancy between times $t_k$ and $t_{k+1}$ is
\begin{equation*} \mathcal{P}^{j \to j }_k = 1 - \mathcal{P}^{j\to j+1}_{k,\text{attach}} - \mathcal{P}^{j\to j-1}_{\text{detach}}.
\end{equation*}

\textbf{Cell proliferation.}
We assume that a cell can divide into two daughter cells with some probability. In doing so, we must consider what happens to any ligand already bound to a dividing cell and, in particular, ensure ligand is not artificially created during this process. The precise rule for how bound ligand is shared between daughter cells will vary according to various factors, including the scope of ligand action on cells. We make the following assumptions. Let $B(m_j,m_h)$ denote the probability that a cell with membrane occupancy $m_j$ creates an offspring with membrane occupancy $m_h$. For $B(m_j,m_h)$ we use evaluations at $(m_j,m_h)$ of a (bimodal) probability density function. For a mother cell with membrane occupancy $m_j$, the average occupancies of the two daughter cells must satisfy $2\sum_{h \leq M} B(m_j,m_h)m_h \leq m_j$ to ensure ligand is not created during division. 
We will consider three division strategies, where a cell of trait $m_j$ divides into:
\begin{itemize}
    \item[(i)] two cells of trait $m_j/2$ on average (\textit{symmetric division}, Figures \ref{fig:schematic_division}, \ref{fig:schematic_division2}a);
    \item[(ii)] one cell of trait $m_j$ and one of trait $0$ on average (\textit{asymmetric division}, Figures \ref{fig:schematic_division}, \ref{fig:schematic_division2}b);
    \item[(iii)] two cells of trait $0$ on average (\textit{consumptive division}, Figures \ref{fig:schematic_division}, \ref{fig:schematic_division2}c).
\end{itemize} 
For the rate of division taking place $b(m_j,c^i_k)$ we consider two cases: 
\begin{itemize}
    \item[(i)] cells do not rely on the presence of free ligand to divide (i.e. $b\equiv b(m_j)$),
    \item[(ii)] cells rely on the presence of ligand to divide, that is, the ligand acts as a nutrient (i.e. $b \equiv b(m_j,c^i_k)$).
\end{itemize}
The probability of a cell with membrane occupancy $m_j$ dividing into a cell with membrane occupancy~$m_h$ is then \begin{equation*}\mathcal{P}^{i, j \to h }_{k, \text{div} }  = b(m_j,c^i_k) B(m_j,m_h)  \Delta t.
\end{equation*}
Cells may die with a certain probability that is independent of their trait and depends on the total cell density due to saturating effects on population growth. In that process, bound ligands do not re-enter the system and decay together with the cell. The probability of dying between times $t_k$ and $t_{k+1}$ is 
\begin{equation*}
    \mathcal{P}^{i,j}_{k,\text{die}} = \gamma (P^i_k) \Delta t,
\end{equation*}
with $ \gamma \in C^1(\mathbb{R}^+)$. The probability of a cell of trait $m_j$ staying quiescent is then 
\begin{equation*}
\mathcal{P}^{i,j}_{k,\text{quiescent}} = 1 - \mathcal{P}^{i,j}_{k,\text{die}} - \mathcal{P}^{i,j \to h}_{k, \text{div}}.
\end{equation*}

\textbf{Diffusion of ligands.}
Free ligand is assumed to move through space in a purely random manner, that is, the probability of moving to an adjacent site within a time interval $\Delta t$ is constant and equal for all directions, i.e.
\begin{equation*}
\mathcal{P}_{\text{move}}^c=\frac{\Delta t}{\Delta x ^2} d,
\end{equation*}
with some constant $d>0$.

\textbf{Ligand attachment to and detachment from membranes.}
Attachment of free ligand to the cell membrane removes it from the extracellular environment. This ligand can subsequently be released later in time. The probabilities of free ligand being removed from and released correspond to those in \eqref{eq:Pattach}, \eqref{eq:Pdetach}. We assume that during one attachment/ detachment event, a number of $\sigma \Delta m  $ ligands are removed/ released, where $\sigma$ denotes the number of attached ligands required to occupy a fraction of $\Delta m$ of a cell's membrane receptors.

\textbf{Ligand consumption by cells.}
We assume that ligands are consumed by cells with probability
\begin{equation*}\mathcal{P}_{\text{consumed}} = \Delta t Q(P^i_k),
\end{equation*}
where $Q  \in C^1(\mathbb{R}^+)$.
Note that the ligand is consumed, regardless of whether the ligand also acts as a nutrient required for cell division.\\

\textbf{Master equations.} The corresponding master equation for the population density of cells at location $x_i$ at time $t_k$ with membrane occupancy state $m_j$ reads as follows

\begin{equation}
\label{eq:masterP}
\begin{split}
    p^{i,j}_{k+1} - p^{i,j}_k= & \underbrace{\mathcal{P}^{i-1 \to i,j}_k p^{i-1,j}_k +\mathcal{P}^{i+1 \to i,j }_k p^{i+1,j}_k - \left( \mathcal{P}^{i \to i-1,j}_k + \mathcal{P}^{i \to i+1,j }_k \right) p^{i,j}_k }_{\text{biased random walk in space}} \\ 
    & +\underbrace{\mathcal{P}^{i,j-1\to j}_{k,\text{attach}} p^{i,j-1}_k +\mathcal{P}^{j+1 \to j}_{\text{detach}} p^{i,j+1}_k - \left( \mathcal{P}^{i,j \to j+1}_{\text{attach}} + \mathcal{P}^{j \to j-1}_{\text{detach}} \right) p^{i,j}_k }_{\text{change of membrane occupancy}} \\ 
    & - \underbrace{\mathcal{P}^{i,j}_{k,\text{die}} p^{i,j}_k}_{\text{probability of dying}} - \underbrace{\frac{1}{2}   \Delta m \sum_{h=0}^M \mathcal{P}^{i,j \to h}_{k,\text{div}}  p^{i,j}_k}_{\text{removal by division}} + \underbrace{\Delta m \sum_{h=0}^M \mathcal{P}^{i,h\to j}_{k,\text{div}} p^{i,h}_k}_{\text{creation by division}},
\end{split}
\end{equation}
while the master equation for the ligands is
\begin{equation}
\label{eq:masterC}
\begin{split}
    c^i_{k+1} - c^i_k = & \underbrace{\mathcal{P}_{\text{move}}^c \left( c^{i-1}_k + c^{i+1}_k - 2 c^i_k \right) }_{\text{random movement}}  - \underbrace{\mathcal{P}_\text{consumed} c^i_k}_{\text{consumption}}\\
    &\underbrace{- \Delta m  \sigma  \Delta m \sum_{h=0}^{M} \mathcal{P}_\text{attach}^{i,h \to h + 1} p^{i,h}_k+ \Delta m  \sigma  \Delta m \sum_{h=0}^{M} \mathcal{P}_\text{detach}^{h \to h -1} p^{i,h}_k}_{\text{attachment/ detachment}},
\end{split}
\end{equation}
where $i \in \mathbb{Z}$, $0 \leq j \leq M$, $k \in \mathbb{N}$.
\section{The PDE model}
\label{sec:Macromodel}

We formally derive the corresponding PDE model from the discrete model \eqref{eq:masterP}, \eqref{eq:masterC} by following the standard approach of expanding about $(x,t,m)$ and taking $\Delta x, \Delta t, \Delta m \to 0$ in an appropriate way, see e.g.~\cite{stevens1997aggregation}. We use the following notation $p(x,t,m) := p^{i,j}_k = p(x^i,t^k,m^j), \quad c(x,t) := c^i_k = c(x^i,t^k)$. If the functions $p$ and $c$ are three times differentiable w.r.t.~$x$ and twice w.r.t. $t$, and $p$ is differentiable twice w.r.t.~$m$ and $t$, respectively, and with $\Delta x, \Delta m, \Delta t$ small enough, we can apply the Taylor expansions to  \eqref{eq:masterP} and \eqref{eq:masterC} about $(x,t,m)$. Inserting all the probabilities defined in Section~\ref{sec:Micromodel} and letting  $\Delta t, \Delta x, \Delta m \to 0$ results in the following non-local diffusion-advection-reaction and diffusion-reaction equations

\begin{equation}
\label{eq:macromodel}
\begin{aligned}
& \frac{\partial p}{\partial t}   =  \frac{\partial }{\partial x } \left[ D(m) \frac{\partial p}{\partial x }  - A(m) \, p \, \frac{\partial c}{\partial x }  \right] 
  + \frac{\partial}{\partial m} \Big[ p \big( K_2 \omega_2(m)  - K_1 \omega_1(m) c \big) \Big]  \\
  & \qquad - \gamma (P)\, p 
+ \int_0^1\!\!\! b(y,c) B(y,m) p(x,t,y) dy - \frac{1}{2} p \, b(m,c) \int_0^1\!\!\! B (m,y) dy, \\ 
& \frac{\partial c}{\partial t}   =  d \frac{\partial^2 c}{\partial x^2}  - Q(P) c
 +\sigma \Big[K_2 \int_0^1\!\!\! \omega_2(y) p(\cdot, \cdot,y) dy -  K_1 c \int_0^1\!\!\! \omega_1(y) p(\cdot, \cdot, y) dy\Big], \\ 
& p(x,0,m) =  p_0(x,m)\;\;\; \text{ in } \mathbb{R} \times (0,1), \qquad c(x,0 ) = c_0(x)  \;\;\; \text{ in }\mathbb{R}, \\
& 0=p(x,t,m) \big(\omega_2 (m) K_2 - \omega_1 (m) K_1 c(x,t) \big)   \quad \text{at }  m=0,1,\, (x,t) \in \mathbb{R} \times \mathbb{R}^+,\\
& P(x,t) :=  \int_0^1 \!\!\!p(x,t,m) \, dm, 
\end{aligned}
\end{equation}
where $K_1, K_2, d, \sigma$ are non-negative constants and $D(m), A_1(m), \gamma(P),Q(P), b(m,c)$, $B(m,y)$ are non-negative, sufficiently regular functions. The reader is referred to Section~\ref{sec:Micromodel} for more details about these functions.

The equation for the cell population density in \eqref{eq:macromodel} has the form of a trait-structured Keller-Segel model for chemotaxis with a growth-fragmentation process that describes a potentially asymmetric division process of cells. The following mechanisms naturally appear in this continuous model from the assumptions made on the individual cell scale: Movement of cells as a result of diffusion and chemotaxis, attachment and detachment of free ligands to and from the cell membrane, decay of ligands, and proliferation of cells.

\section{Travelling wave solutions}
\label{sec:TW}

The results presented here are based on techniques employed in ~\cite{ lorenzi2023derivation,lorenzi2022trade, lorenzi2022invasion}, which build on methods developed in~\cite{barles2009concentration, diekmann2005dynamics, lorz2011dirac, perthame2006transport, perthame2008dirac}. We assume diffusion happens on a slower timescale and cell proliferation on a faster timescale. By introducing a small parameter $\epsilon$, we obtain the following rescaled system in $\mathbb{R} \times \mathbb{R}^+ \times (0,1)$:
\begin{equation} \label{eq:system}
\begin{aligned}
&\frac{\partial p_\epsilon}{\partial t}  =  \frac{\partial }{\partial x}  \left[ \epsilon \tilde{D} (m) \frac{\partial p_\epsilon}{\partial x}    - A(m) \, p_\epsilon  \frac{\partial c_\epsilon }{\partial x}   \right]   + \frac{\partial}{\partial m} \Big[ p_\epsilon  \big( K_2 \omega_2(m) - K_1 \omega_1(m) c_\epsilon  \big) \Big] \\
& \qquad - \frac{1}{\epsilon } \tilde{\gamma} (P_\epsilon) p_\epsilon + \frac{1}{\epsilon} \Big[\int_0^1\!\!\! \tilde{b}(y,c)B (y,m) p_\epsilon (\cdot,\cdot,y) dy  - \frac{1}{2} p_\epsilon \, \tilde{b} (m,c)\int_0^1\!\!\! B(m,y) dy \Big], \\
&\frac{\partial c_\epsilon }{\partial t}  =  \epsilon \tilde{d}\, \frac{\partial^2 c_\epsilon}{\partial x^2}    -  Q(P_\epsilon)c_\epsilon  + \sigma \Big[K_2 \int_0^1\!\!\! \omega_2(y) p_\epsilon (\cdot,\cdot,y) dy  -  K_1 c_\epsilon  \int_0^1\!\!\! \omega_1(y) p_\epsilon(\cdot,\cdot,y) dy \Big], \\
& p_\epsilon (x,0,m) =  p_0(x,m) \qquad  \text{in } \mathbb{R} \times (0,1), \quad \qquad 
 c_\epsilon (x,0 ) = c_0(x) \qquad  \text{in } \mathbb{R},\\
& 0= p_\epsilon (x,t,m) (\omega_2 (m) K_2 - \omega_1 (m) K_1 c_\epsilon (x,t) )  \quad \text{at} \quad m=0,1, \, (x,t) \in \mathbb{R} \times \mathbb{R}^+ \\
& P_\epsilon (x,t) :=  \int_0^1 p_\epsilon (x,t,m) \, dm,
\end{aligned}
\end{equation}
where $\tilde{D}(m) = D(m)/\epsilon$, $\tilde{\gamma}(P) = \epsilon \gamma(P)$, $\tilde{b}(m,c) = \epsilon b(m,c)$, and $\tilde{d} = d/\epsilon$.
To formally derive the properties of travelling wave solutions for the re-scale system \eqref{eq:system} and to determine the minimum wave speed we consider the following assumptions. For readability, we omit the tilde symbols on the re-scaled coefficients in the remainder of the paper.
\begin{assumption}
\label{assum}
\begin{itemize}
\item[(A0)] The consumptions and decay terms are
\begin{equation*} Q(P) := qP, \quad \gamma(P) := \beta P, \quad \text{with } \; q,\beta > 0.
\end{equation*}
\item[(A1)] The initial conditions for the ligand $c_0 \in C(\mathbb R)\cap L^2(\mathbb R)$ satisfy
\begin{equation}
\label{eq:ass:c0}
     c_0 > \frac{\sigma K_2}{q}.
\end{equation}
\item[(A2)] The sensitivity function $A \in C^1([0,1])$ is non-decreasing.
\item[(A3)] The division rate $b$ is either,         \textbf{(I)} independent of the chemoattractant, i.e. $b(m) \equiv b$ and we assume 
\begin{equation*}
\frac{d}{dm }b(m) <0, \quad b(1) = 0,
\end{equation*}
or \textbf{(II)} dependent on the chemoattractant, i.e.~$b \equiv b(m,c)$, with 
\begin{equation*}
    \partial_m b(m,c) <0, \quad \partial_c b(m,c) >0, \quad b(m,0) = 0,\quad b(1,c) = 0.
\end{equation*}
The division kernel has the form
\begin{equation}
\label{eq:ass:B}
B(y,m) = \frac 1 \epsilon \hat{B} \left( y, \frac{y-m}{\epsilon} \right), 
\end{equation}
where $\hat{B}, \partial_m \hat{B} \in C(\mathbb R^2)\cap L^1(\mathbb R^2)$, and 
\begin{equation} 
\label{eq:ass:B_integral}
\sup_{y\geq 0} \int_{-\infty}^\infty e^{|z|^2} \hat{B}(y,z) dz < \infty \quad \text{and} \quad 
\int_{-\infty}^\infty \hat{B}(y,z) dz = 1 \text{ for all } y \in [0,1].
\end{equation}
Moreover, we define
\begin{equation*}
    \bar{B}(y) := \int_{-\infty} ^\infty z \hat{B}(y,z) dz,  
\end{equation*}
i.e.~daughter cells from a mother of trait $y$ will be, on average, of trait $\bar{B}(y)$.
\item[(A4)] The functions $\omega_1(m), \omega_2(m) \in C^1([0,1])$ satisfy $0 \leq \omega_{i}(m) \leq 1$ for $i=1,2$ and
\begin{equation*}\begin{split}
    \omega_1(1) = 0, \quad \omega_2(0) = 0.
\end{split}
\end{equation*}
\item[(A5)] The cell  density
$p_\epsilon (x,t,m)$ is non-negative and bounded, for all $\epsilon >0$.
\end{itemize}
\end{assumption}

Similar to  \cite{lorenzi2022trade,lorz2011dirac}, we expect that $p_{\epsilon}$ can be approximated as
\begin{equation*}
p_\epsilon (x,t,m) \approx P(x,t)\delta(m - \bar{m}(x,t) ), \quad \text{for } \epsilon \to 0,
\end{equation*}
where $\delta(m-\bar m)$ is the Dirac Delta centered at $\bar m$.  We will call $\bar{m}(x,t)$ the dominant trait at $(x,t)$.

To characterise the properties of travelling wave solutions of \eqref{eq:system}, we will use the Hopf-Cole transformation $p_\epsilon(x,t,m) = \exp \left( u_\epsilon (x,t,m)/\epsilon \right)$, and denote by $p,u,c$ the leading order terms in the asymptotic expansions for $p_\epsilon, u_\epsilon, c_\epsilon$ in powers of $\epsilon$. First we shall derive the equation satisfied by the dominate trait $\bar{m}$. 

\begin{proposition} 
Under Assumptions~\ref{assum}, in the limit as $\epsilon \to 0$ the dominant trait $\bar{m}(x,t)$ is defined for $(x,t) \in \text{supp}(P)$ and evolves according to 
\begin{equation}
\label{eq:barm}
\partial_t \Bar{m} + A(\Bar{m}) \partial_x \Bar{m} \partial_x c =\left[  K_1 \omega_1(\bar{m}) c - K_2 \omega_2(\bar{m})\right] - \frac{\partial_m b(\bar{m},c)}{2\partial^2_{mm}u(\cdot,\cdot,\bar{m})}  - b(\bar{m},c) \bar{B}(\bar{m}),
\end{equation}
for $c$ satisfying \eqref{eq:epsto0}.
\end{proposition}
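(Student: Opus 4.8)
The plan is to run the standard WKB / Hopf--Cole asymptotics on the rescaled system \eqref{eq:system}: insert the ansatz $p_\epsilon=\exp(u_\epsilon/\epsilon)$, extract at leading order a Hamilton--Jacobi (eikonal) equation for the limiting phase $u$, and then differentiate that equation with respect to $m$ along the ridge $m=\bar m(x,t)$ on which $u(x,t,\cdot)$ attains its maximum. As is customary in this formal setting, I will assume that $p_\epsilon,u_\epsilon,c_\epsilon$ admit asymptotic expansions in $\epsilon$ with smooth leading-order terms $p,u,c$, and that the concentration $p_\epsilon\to P\,\delta(\cdot-\bar m)$ takes place, so that in particular $P_\epsilon\to P$ and $\int_0^1\omega_i(y)p_\epsilon(\cdot,\cdot,y)\,dy\to\omega_i(\bar m)P$ for $i=1,2$.

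First I would substitute the Hopf--Cole ansatz into the $p_\epsilon$-equation of \eqref{eq:system} and divide through by $\epsilon^{-1}p_\epsilon$. Letting $\epsilon\to0$ term by term, the $O(\epsilon)$ spatial diffusion contributes $D(m)(\partial_x u)^2$, the chemotactic term contributes $-A(m)\,\partial_x u\,\partial_x c$, the trait-transport term contributes $\bigl(K_2\omega_2(m)-K_1\omega_1(m)c\bigr)\partial_m u$, and the death term contributes $-\gamma(P)=-\beta P$ by (A0). The one delicate ingredient is the non-local growth--fragmentation term. Using \eqref{eq:ass:B}, I would write its gain part as a concentrated integral against $\hat B\bigl(y,(y-m)/\epsilon\bigr)$, substitute $z=(y-m)/\epsilon$, and Taylor-expand $u_\epsilon(x,t,m+\epsilon z)=u_\epsilon(x,t,m)+\epsilon z\,\partial_m u_\epsilon+O(\epsilon^2)$ so that $p_\epsilon(x,t,m+\epsilon z)/p_\epsilon(x,t,m)=e^{z\,\partial_m u_\epsilon+O(\epsilon)}$; the moment bound $\sup_y\int e^{|z|^2}\hat B(y,z)\,dz<\infty$ from \eqref{eq:ass:B_integral}, together with the boundedness of $p_\epsilon$ in (A5) (hence of $\partial_m u$), supplies a dominating function for passing to the limit inside the $z$-integral, giving $b(m,c)\int_{\mathbb R}\hat B(m,z)\,e^{z\,\partial_m u}\,dz$, while the normalisation $\int\hat B(m,z)\,dz=1$ turns the loss part into $\tfrac12 b(m,c)$. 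The outcome is the eikonal equation
\begin{equation*}
\partial_t u = D(m)(\partial_x u)^2 - A(m)\,\partial_x u\,\partial_x c + \bigl(K_2\omega_2(m)-K_1\omega_1(m)c\bigr)\partial_m u - \beta P + b(m,c)\!\int_{\mathbb R}\!\hat B(m,z)\,e^{z\,\partial_m u}\,dz - \tfrac12 b(m,c),
\end{equation*}
and, passing to the limit in the $c_\epsilon$-equation of \eqref{eq:system} and using the concentration, $c$ is found to satisfy the reduced equation \eqref{eq:epsto0}.

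Next I would exploit the concentration constraint. Since $P$ is finite and strictly positive on $\mathrm{supp}(P)$, the limiting phase must satisfy $\max_m u(x,t,m)=0$ there, attained at $m=\bar m(x,t)$; by (A4) the no-flux conditions in \eqref{eq:system} force $p_\epsilon$ to vanish at $m=0,1$, so this maximum is interior, and I will take it to be non-degenerate, i.e.\ $\partial_m u(x,t,\bar m)=0$ and $\partial^2_{mm}u(x,t,\bar m)<0$. Differentiating the identity $u(x,t,\bar m(x,t))\equiv0$ in $x$ and using $\partial_m u(x,t,\bar m)=0$ yields the crucial identity $[\partial_x u]_{m=\bar m}=0$ (and likewise $[\partial_t u]_{m=\bar m}=0$), while differentiating $\partial_m u(x,t,\bar m(x,t))\equiv0$ in $t$ and in $x$ gives $[\partial_{tm}u]_{m=\bar m}=-\partial^2_{mm}u(x,t,\bar m)\,\partial_t\bar m$ and $[\partial_{xm}u]_{m=\bar m}=-\partial^2_{mm}u(x,t,\bar m)\,\partial_x\bar m$. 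I would then differentiate the eikonal equation with respect to $m$ and evaluate at $m=\bar m$: because $\partial_m u(x,t,\bar m)=0$ and $[\partial_x u]_{m=\bar m}=0$, every term carrying a factor $\partial_m u$, $(\partial_x u)^2$, or $\partial_x u$ drops out — in particular the $D'(m)(\partial_x u)^2$, the $2D(m)\,\partial_x u\,\partial_{xm}u$, and the $A'(m)\,\partial_x u\,\partial_x c$ contributions all vanish — whereas the term $-A(m)\,\partial_{xm}u\,\partial_x c$ survives and, after substituting $[\partial_{xm}u]_{m=\bar m}$, supplies the advection term $A(\bar m)\,\partial_x\bar m\,\partial_x c$. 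Differentiating the non-local term, using $\int\hat B(m,z)\,dz\equiv1$ (so that $\int\partial_m\hat B(m,z)\,dz=0$) and $\int z\,\hat B(m,z)\,dz=\bar B(m)$, leaves only $\tfrac12\partial_m b(\bar m,c)+b(\bar m,c)\bar B(\bar m)\,\partial^2_{mm}u(x,t,\bar m)$. Substituting $[\partial_{tm}u]_{m=\bar m}$ and $[\partial_{xm}u]_{m=\bar m}$ and dividing by $-\partial^2_{mm}u(x,t,\bar m)\neq0$ then produces exactly \eqref{eq:barm}.

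The main obstacle is the careful treatment of the non-local growth--fragmentation term under the exponential rescaling: one must correctly track the powers of $\epsilon$ coming from \eqref{eq:ass:B}, from the change of variables $z=(y-m)/\epsilon$, and from the Hopf--Cole exponential, and justify passage to the limit inside the $z$-integral, which is precisely what the Gaussian moment bound in \eqref{eq:ass:B_integral} is there to guarantee. The remainder is bookkeeping; the one clean structural observation that makes \eqref{eq:barm} come out in pure transport form is that $u$ vanishes identically along the ridge $m=\bar m$, which annihilates the spatial-diffusion and sensitivity-slope contributions that would otherwise appear.
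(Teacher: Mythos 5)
Your proposal is correct and follows essentially the same route as the paper's proof: the Hopf--Cole substitution leading to the constrained Hamilton--Jacobi equation \eqref{eq:epsto0}, the rescaling $y=m+\epsilon z$ of the growth--fragmentation kernel via \eqref{eq:ass:B}--\eqref{eq:ass:B_integral}, the vanishing of $u$ and its first derivatives along $m=\bar m$ on $\mathrm{supp}(P)$, and the differentiation of the limit equation in $m$ at $m=\bar m$ combined with $\partial^2_{tm}u=-\partial^2_{mm}u\,\partial_t\bar m$ and $\partial^2_{xm}u=-\partial^2_{mm}u\,\partial_x\bar m$. All the key cancellations you identify (in particular the disappearance of the $D$-terms and of $A'(m)\partial_x u\,\partial_x c$, and the reduction of $\partial_m R$ to $\tfrac12\partial_m b+b\,\bar B\,\partial^2_{mm}u$ via $\int\partial_m\hat B\,dz=0$) are exactly those used in the paper.
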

\begin{proof}
For the Hopf-Cole transformation  $p_\epsilon (x,t,m) = e^{u_\epsilon(x,t,m) / \epsilon}$, and for $x \in \text{supp}(P_\epsilon)$, we assume that both $u_\epsilon$ and the leading order in $\epsilon$ term  $u$ are strictly concave in $m$ and that $u$ has a unique maximum at $\bar{m}$, i.e.
\begin{equation*}\partial_m u (x,t,\bar{m}) = 0, \quad \quad \partial^2_{mm} u (x,t,\bar{m}) < 0.
\end{equation*}
Substituting the Hopf-Cole transformation in \eqref{eq:system} gives
\begin{equation}\label{eq:u_eps}
\begin{aligned}
       \partial_t u_\epsilon &=  D(m) \left[ (\partial_x u_\epsilon)^2 + \epsilon \partial^2_{xx} u_\epsilon \right] - A(m) \left[ \partial_x u_\epsilon \partial_x c_\epsilon + \epsilon \partial^2_{xx} c_\epsilon \right]  \\
       & + \partial_m u_\epsilon \left[ K_2 \omega_2(m) - K_1 \omega_1(m) c_\epsilon \right] + \epsilon [K_2\omega_2'(m) - K_1 \omega_1'(m) c_\epsilon] \\
        & -\gamma (P_\epsilon) - \frac{1}{2} b(m,c_\epsilon) \int_0^1 B(m,y) dy  + \int_0^1 b(y,c_\epsilon) B (y,m) e^{\frac 1 \epsilon[u_\epsilon (\cdot,\cdot,y)- u_\epsilon (\cdot,\cdot,m) ]} dy
        \end{aligned}
        \end{equation}
        and
        \begin{equation}
        \label{eq:c_eps}
            \begin{aligned}
         \partial_t c_\epsilon =  \epsilon d \partial^2_{xx} c_\epsilon - Q(P_\epsilon) c_\epsilon + \sigma K_2 \int_0^1 \omega_2(y) e^{\frac{u_\epsilon(\cdot,\cdot,y)}\epsilon} dy - \sigma K_1 c_\epsilon \int_0^1 \omega_1(y) e^{\frac{u_\epsilon(\cdot,\cdot,y)}\epsilon} dy.
\end{aligned}
\end{equation}
It is practical to make a change of variable $y \to z$, given by $y=m+\epsilon z$. Then, using assumptions \eqref{eq:ass:B} and \eqref{eq:ass:B_integral}, the first integral in \eqref{eq:u_eps} approaches $1$ in the limit as $\epsilon \to 0$. For the last integral in \eqref{eq:u_eps} we obtain
\begin{equation*}
\int_{-m/\epsilon}^{(1-m)/\epsilon} b(m + \epsilon z, c_\epsilon) \hat{B}(m+\epsilon z,z) \exp\Big(\frac{u_\epsilon(x,t,m+\epsilon z) - u_\epsilon (x,t,m) }\epsilon\Big) dz.
\end{equation*}
Since   $p_\epsilon$ is non-negative and bounded for all $\epsilon >0$,  we obtain from the form of the Hopf-Cole transformation in the limit the following property
\begin{equation}
\label{eq:constraintsu}
    u(x,t,\Bar{m}(x,t)) = \max_{m \in [0,1]} u (x,t,m) = 0, \quad  (x,t) \in \text{supp}(P).
    \end{equation}
Using these properties  and letting $\epsilon \to 0$ in \eqref{eq:u_eps} and \eqref{eq:c_eps}, we formally obtain 
\begin{equation} \label{eq:epsto0}
    \begin{split}
        \partial_t u = & D(m)  (\partial_x u)^2  - A(m) \partial_x u \partial_x c + \partial_m u \left[ K_2 \omega_2(m) - K_1 \omega_1(m) c \right] \\
        &  -\gamma(P) - \frac{1}{2} b(m,c) \underbrace{\int_{-\infty}^\infty \hat{B}(m,z) dz }_{=1} + b(m,c) \int_{-\infty}^\infty  \hat{B} (m,z) e^{z \partial_m u} dz,\\
        \partial_t c = &  - Q(P) c + \sigma \left[ K_2 \omega_1(\bar{m}) - K_1 c \omega_2 (\bar{m}) \right]P.
    \end{split}
\end{equation}
To take a limit in  the integrals in \eqref{eq:c_eps} we used 
\begin{equation*}
\int_0^1 \omega_i(y) e^{u_\epsilon (x,t,y)/\epsilon} dy \to \omega_i(\bar{m}(x,t)) P(x,t) \quad \text{ as } \epsilon \to 0, \; \text{ for a.e. }   t \geq 0, \; x \in \mathbb{R}, \; i=1,2,  
\end{equation*}
since  for $u_\epsilon (x,t,m) \leq 0$ and $u_\epsilon (x,t,\bar{m}) = 0$ the sequence  $e^{u_\epsilon (x,t,m)/\epsilon}$ converges to $P(x,t) \delta(m - \bar{m} (x,t) )$ as $\epsilon \to 0$. For simplicity, denote the reaction term in the equation for $u$ as
\begin{equation*}
R(m,u,c,P):=-\gamma(P) - \frac{1}{2} b(m,c)  + b(m,c) \int_{-\infty}^\infty  \hat{B} (m,z) e^{z \cdot \partial_m u(x,t,m)} dz.
\end{equation*}
The property \eqref{eq:constraintsu} implies
\begin{equation}
\label{eq:assumpDu}
\partial_x u (x,t,\Bar{m}) = 0, \quad \partial_t u (x,t,\Bar{m}) = 0, \quad \partial_m u (x,t,\Bar{m}) = 0 \quad \text{for } (x,t) \in \text{supp}(P).
\end{equation}
Evaluating \eqref{eq:epsto0} at $m=\bar{m}(x,t)$  and using \eqref{eq:constraintsu} and \eqref{eq:assumpDu} gives 
\begin{equation}
\label{eq:R=0}
-\gamma(P) + \frac{1}{2} b(\bar{m},c)  = 0 \quad \text{ for } \; (x,t) \in \text{supp}(P).
\end{equation}
Differentiating~\eqref{eq:epsto0} w.r.t.~$m$, evaluating the resulting equation at $m= \bar{m}$, and using \eqref{eq:assumpDu}, yield
\begin{equation}
\label{eq:seconddiffm}
\begin{aligned}
\partial^2_{tm} u (\cdot,\cdot,\bar{m}) = & -A(\bar{m}) \partial^2_{xm} u(\cdot,\cdot,\bar{m}) \partial_x c \\
& + \partial^2_{mm} u(\cdot,\cdot,\bar{m}) \left[ K_2\omega_2 (\bar{m}) - K_1 \omega_1 (\bar{m}) c\right] + \partial_m R(\bar{m},u,c,P),
\end{aligned}
\end{equation}
where
\begin{equation*}
\begin{aligned}
\partial_m R (\bar{m},c,u,P) = &\frac{1}{2} \partial_m b(\bar{m},c) 
 \\
 &+ b(\bar{m},c) \Big[  \int_{-\infty}^\infty \partial_m \hat{B} (\bar{m},z) dz + \partial_{mm}^2 u (\cdot,\cdot,\bar{m}) \int_{-\infty}^\infty z \hat{B}(\bar{m},z) dz \Big].
 \end{aligned}
\end{equation*}
Since  $\partial_m \hat{B}(m,z)$ is continuous in $m$ and $z$ then 
\begin{equation*}
\int_{- \infty}^{\infty} \partial_m \hat{B}(\bar{m},z) dz = \partial_m \int^{\infty}_{- \infty} \hat{B}(m,z)dz \Big| _{m= \bar{m}}=0.
\end{equation*}
Differentiating $\partial_m u(x,t,\bar{m}(x,t)) = 0$ with respect to  $t$ and $x$ yields
\begin{equation*} 
\partial^2_{tm} u (x,t,\Bar{m}) = - \partial^2_{mm} u(x,t,\Bar{m}) \partial_t \Bar{m}(x,t), \quad \partial^2_{xm} u (x,t,\Bar{m}) = - \partial^2_{mm} u(x,t,\Bar{m}) \partial_x \Bar{m}(x,t),
\end{equation*}
and finally, by substituting these into \eqref{eq:seconddiffm}, we obtain \eqref{eq:barm}.
\end{proof}

Now we characterise the properties of travelling wave solutions of~\eqref{eq:system} in the limit as~$\epsilon \to 0$. 
\begin{proposition}
\label{res:A1}
Let Assumptions~\ref{assum} be satisfied. If
\begin{equation}
\label{eq:ass:b}
\begin{split}
b(\bar{m},c) \bar{B}(\bar{m}) >& K_1 \omega_1(\bar{m}) c  -K_2 \omega_2(\bar{m}) \quad \text{ for $(x,t) \in $} \text{supp} (P),
\end{split}
\end{equation}
in the limit as  $\epsilon \to 0$, travelling wave solutions $p(z,m) = P(z) \delta(m-\bar{m}(z))$ and $c(z)$ with $\displaystyle{\lim_{z \to - \infty}} \bar{m}(z) = 0$, $\displaystyle{\lim_{z \to \infty}} c(z) = c_0$, where $z=x-vt$, for the re-scaled system~\eqref{eq:system} have a minimal wave speed 
\begin{equation}
\label{eq:wavespeed}
    v\geq v^* := \sup _{z \in \text{supp} (P) } \sqrt{A(\bar{m}) \left[ c(z) ( Q(P(z)) + \sigma P(z) K_1 \omega_1(\bar{m}) ) -  \sigma P(z) K_2 \omega_2(\bar{m}) \right] }.\end{equation}

These solutions have the properties
\begin{itemize}
    \item[(i)] $\bar{m}'(z) >0$,
    \item[(ii)] $c'(z) >0$, $\displaystyle \lim_{z \to -\infty } c(z) =0$.
\end{itemize} 
Considering the case \textbf{(I)} in (A3), $P(z)$ satisfies
\begin{itemize}
        \item[(iii)] $P'(z)<0$, $\beta P(z) = \frac{1}{2} b(\bar{m}(z))$ on $z \in \text{supp} (P) $ and 
    \begin{equation}
    \label{eq:Pequilib}
        \lim_{z \to - \infty} P(z) = \frac{1}{2 \beta} b(0),\end{equation}
\end{itemize}
whereas considering the case \textbf{(II)} in (A3), $P(z)$ satisfies
\begin{itemize}    
\item[(iii)]  $\text{supp} (P) \subset \text{supp} (c)$, $\beta P(z) = \frac{1}{2} b(\bar{m}(z),c(z))$, and $P(z)$ attains a maximum at $z^*$, where $z^*$ satisfies $\left[ \partial_m b(\bar{m},c) \bar{m}' + \partial_c b (\bar{m},c) c' \right]_{z = z^*} = 0$.
    \end{itemize} 

The leading edge of the travelling wave coincides with the transition point $\zeta \in \mathbb{R}$ between invaded ($P(z)>0$) and non-invaded region ($P(z)=0, c(z) = c_0$). From \eqref{eq:R=0}, we obtain that this is also the point where $\bar{m}(\zeta) = 1$.
\end{proposition}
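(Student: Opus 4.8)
The plan is to work entirely in the travelling-wave frame $z = x - vt$, where the leading-order limiting system \eqref{eq:epsto0} together with the dominant-trait equation \eqref{eq:barm}, the algebraic constraint \eqref{eq:R=0}, and the boundary behaviour $\bar m \to 0$, $c \to c_0$ as $z \to \pm\infty$ reduce to a system of ODEs for $\bar m(z)$, $c(z)$, $P(z)$. First I would substitute the travelling-wave ansatz into the $c$-equation of \eqref{eq:epsto0}: with $\partial_t = -v\,\partial_z$ and (A0) this gives $-v c' = -qP c + \sigma[K_2\omega_2(\bar m) - K_1 c\,\omega_1(\bar m)]P$, which I will rearrange to express $c'$ in terms of $c,P,\bar m$. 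Similarly \eqref{eq:barm} becomes a first-order relation $-v\bar m' - A(\bar m)\bar m' c' = [\,K_1\omega_1(\bar m)c - K_2\omega_2(\bar m)\,] - \tfrac{\partial_m b}{2\partial^2_{mm}u} - b(\bar m,c)\bar B(\bar m)$, and \eqref{eq:R=0} with (A0) gives the purely algebraic identity $\beta P(z) = \tfrac12 b(\bar m(z),c(z))$ on $\mathrm{supp}(P)$, which immediately yields item (iii) in both cases once monotonicity of $c$ and $\bar m$, plus the signs in (A3), are known; in case (II) the maximum-point characterisation of $P$ follows by differentiating this identity.

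Next I would establish the monotonicity claims (i) and (ii). The key is the sign of $\partial^2_{mm}u(\cdot,\cdot,\bar m)$, which is strictly negative by the strict-concavity assumption made in the proof of the previous proposition. Evaluating the travelling-wave form of \eqref{eq:barm} and collecting the $\bar m'$ terms, I get $(v + A(\bar m)c')\,\bar m' = -\big[K_1\omega_1(\bar m)c - K_2\omega_2(\bar m)\big] + \tfrac{\partial_m b}{2\partial^2_{mm}u} + b(\bar m,c)\bar B(\bar m)$. Under hypothesis \eqref{eq:ass:b}, $b\bar B > K_1\omega_1 c - K_2\omega_2$, and since $\partial_m b < 0$ while $\partial^2_{mm}u < 0$ the term $\partial_m b/(2\partial^2_{mm}u)$ is positive; hence the right-hand side is strictly positive. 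Provided the prefactor $v + A(\bar m)c'$ is positive (which I will check is consistent with $c' > 0$, $A \ge 0$, and $v \ge v^* \ge 0$), this forces $\bar m'(z) > 0$, giving (i). For (ii), I substitute $\bar m' > 0$ back into the $c'$-expression: using (A1) together with the constraint \eqref{eq:R=0} in the form $qP c = v c' + \sigma K_1 c\omega_1 P - \sigma K_2\omega_2 P$ and the boundary data $c(+\infty) = c_0 > \sigma K_2/q$, a monotone-ODE / comparison argument shows $c' > 0$ throughout and $c \to 0$ as $z \to -\infty$ (the decay at $-\infty$ uses $\bar m \to 0$, $\omega_2(0) = 0$, $\omega_1(0)$ finite, so the $c$-equation degenerates to $-vc' = -qPc - \sigma K_1 c\omega_1(0)P \le 0$ there).

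The minimal-wave-speed bound \eqref{eq:wavespeed} is obtained by the standard linearisation-at-the-leading-edge argument. Near the transition point $\zeta$ where $P \to 0^+$, I linearise the $P$-equation (the total-density equation obtained by integrating the $p$-equation of \eqref{eq:macromodel} over $m$, in travelling-wave form) about the state $P = 0$, $c = c_0$, $\bar m = 1$ — noting $\bar m(\zeta) = 1$ follows from \eqref{eq:R=0} since $b(1,\cdot) = 0$ forces $P(\zeta) = 0$. Seeking exponential decay $P \sim e^{-\lambda(z-\zeta)}$ and requiring $\lambda$ real (so $P \ge 0$) yields a quadratic in $v$ whose discriminant condition is exactly $v^2 \ge A(\bar m)[c(Q(P) + \sigma P K_1\omega_1(\bar m)) - \sigma P K_2\omega_2(\bar m)]$ evaluated along the wave; taking the supremum over $\mathrm{supp}(P)$ gives \eqref{eq:wavespeed}. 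The main obstacle I anticipate is making the leading-edge linearisation rigorous in the presence of the Dirac-concentrated structure $p = P\delta(m - \bar m)$ and the non-local growth-fragmentation term: one must argue that, to leading order near $\zeta$, the effective growth rate of $P$ is governed by the value of the fitness-type quantity at $m = \bar m = 1$, and that the chemotactic flux $A(\bar m)P\,c'$ contributes the effective transport coefficient $A(1)c'$ in the linearised equation. I would handle this by first deriving the closed travelling-wave ODE for $(P, c, \bar m)$ from \eqref{eq:epsto0}–\eqref{eq:barm} and the $c$-equation, then linearising that finite-dimensional system — the same route used in \cite{lorenzi2022trade, lorenzi2022invasion} — and finally checking the sign bookkeeping that ties the discriminant to \eqref{eq:wavespeed}, and that $v \ge v^*$ is consistent with the positivity of $v + A(\bar m)c'$ used in step two.
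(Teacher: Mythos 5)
There is a genuine gap, and it originates in a sign error in your travelling-wave form of the dominant-trait equation. Equation \eqref{eq:barm} reads $\partial_t\bar m + A(\bar m)\,\partial_x\bar m\,\partial_x c = \mathrm{RHS}$; with $\bar m = \bar m(z)$, $z = x - vt$, this becomes $\bar m'\left[A(\bar m)c' - v\right] = \mathrm{RHS}$, not $-v\bar m' - A(\bar m)\bar m'c' = \mathrm{RHS}$ as you wrote. Under \eqref{eq:ass:b} and the concavity $\partial^2_{mm}u<0$ (with $\partial_m b<0$), the right-hand side is strictly negative, so $\bar m'>0$ forces the denominator $A(\bar m)c'-v$ to be negative, i.e. $v > A(\bar m)c'(z)$ for every $z\in\mathrm{supp}(P)$. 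That inequality is the entire source of the minimal wave speed in the paper: substituting $c' = \tfrac1v\left[c\big(Q(P)+\sigma PK_1\omega_1(\bar m)\big)-\sigma PK_2\omega_2(\bar m)\right]$ from the ligand equation turns $v>A(\bar m)c'$ into $v^2 > A(\bar m)\left[\cdots\right]$ pointwise, and taking the supremum over $\mathrm{supp}(P)$ gives \eqref{eq:wavespeed}. Your flipped sign makes the prefactor $v+A(\bar m)c'$ automatically positive, so (i) becomes vacuous and the monotonicity requirement is completely decoupled from the wave speed — which is why you are then forced to look for $v^*$ elsewhere.

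The elsewhere you choose — a linearisation at the leading edge $\zeta$ about $P=0$, $c=c_0$, $\bar m=1$ with an exponential ansatz and a discriminant condition — cannot produce \eqref{eq:wavespeed}. At the edge state $P=0$ and $Q(P)=qP=0$, so the bracket $c\big(Q(P)+\sigma PK_1\omega_1\big)-\sigma PK_2\omega_2$ vanishes identically; a local dispersion relation there would give the trivial bound $v\geq 0$, not a supremum of a quantity evaluated along the whole wave profile. The paper's $v^*$ is intrinsically a global constraint over $\mathrm{supp}(P)$ coming from the sign of the denominator in \eqref{eq:mbar_frac}, not a Fisher-KPP-type edge condition. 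Your treatment of (ii), the algebraic identity $\beta P=\tfrac12 b(\bar m,c)$, the two versions of (iii), the limit \eqref{eq:Pequilib}, and the characterisation $\bar m(\zeta)=1$ via $b(1,\cdot)=0$ all match the paper's route and are fine; the proof is repaired simply by correcting the sign and extracting $v^*$ from $v>A(\bar m)c'$ as above, discarding the leading-edge linearisation.
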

\textit{Remark.} Using the assumption on $c_0$ and that $0 \leq c(z) \leq \max \left[ \frac{\sigma K_1 \omega_1 (\bar{m}) }{q + \sigma K_2 \omega_2 (\bar{m}) }, c_0 \right]$, an upper bound for the minimal travelling wave speed is
\begin{equation*}
    v^* \leq \sup_{z \in \text{supp}(P)} \sqrt{ \max_{m \in [0,1] } A(m) P(z) \sigma K_1 }.
\end{equation*}
\begin{proof}
Introducing the travelling wave ansatz with $z:= x - vt$ for  $v>0$ and 
\begin{equation*}
 u(x,t,m) = u(z,m), \quad c(x,t) = c(z), \quad 
 P(x,t) = P(z), \quad \bar m(x,t)=\bar{m}(z), 
 \end{equation*}
 from \eqref{eq:epsto0} we obtain for $u$ and $c$ the following equations
\begin{equation}
\label{eq:uandc}
    \begin{split}
        \partial_z u \left[ A(m) c' - v\right] &= D( m) ( \partial_z u)^2 + \partial_m u (K_2 \omega_2(m) -K_1 \omega_1(m) c) +R(m,u,c,P),\\
        -v c' &= -Q(P) c + \sigma [K_2 \omega_2(\bar{m}) - K_1 \omega_1(\bar{m}) c] P.    \end{split}
\end{equation}
Properties \eqref{eq:constraintsu} and \eqref{eq:assumpDu} on $u$ yield
\begin{align}
u(z,\bar{m}(z)) &= \max_{m \in [0,1]} u(z,m) = 0, \quad \text{for } z \in \text{supp}(P),\\
 \partial_z u(z,\Bar{m}(z)) &= \partial_m u (z,\Bar{m}(z)) = 0, \quad \text{for } z \in \text{supp}(P).
\end{align}
Equation \eqref{eq:R=0} is also valid in travelling wave coordinates,
\begin{equation}  \label{eq:R=0_TV}
    \gamma (P(z)) = \frac{1}{2} b(\bar{m}(z),c(z)), \quad \text{for} \quad z \in \text{supp} (P).
\end{equation}
Moreover, from equation \eqref{eq:barm} we obtain
\begin{equation} \label{eq:barm_travw}
    \Bar{m}' \left[ A(\Bar{m}) c' - v\right] = [K_1 \omega_1(\bar{m}) c-K_2 \omega_2(\bar{m}) ] - \frac{ \partial_m b(\bar{m},c) }{2 \partial^2_{mm} u (z,\bar{m})}  - b(\bar{m},c) \bar{B}(\bar{m}),
\end{equation}
for $z \in \text{supp}(P)$.  With the assumption on initial conditions~\eqref{eq:ass:c0}, i.e.\ $\displaystyle \lim_{z \to \infty} c(z)=c_0 > \frac{\sigma K_2}{q}$ and \eqref{eq:uandc}, we have $c'(z) \geq 0$, and at the back of the wave the chemoattractant will converge to its steady state, i.e.
$$
\lim_{z \to - \infty} c(z) = \lim _{z \to - \infty} \frac{\sigma K_2 \omega_2(\bar{m}(z)) P(z)}{Q(P(z)) + \sigma K_1 \omega_1 (\bar{m}(z)) P(z)}.
$$
We look for invading cell fronts with positive speed $v$ and monotonically increasing solutions ($\bar{m}'(z)>0$) to the differential equation \eqref{eq:barm_travw} subject to the asymptotic condition
\begin{equation}
\label{eq:mbar_limit0}
\lim_{z \to - \infty} \bar{m}(z) = 0,
\end{equation}
as we are interested in solutions with a more proliferative population in the back of the travelling wave. As highlighted in the introduction, traveling cell bands often exhibit a distinct spatial organization: more migratory cells dominate the leading edge, while proliferative cells are concentrated toward the rear. This specific arrangement is of particular interest due to its relevance in various biological applications.\\
This leads, together with $\omega _2(0) = 0$, see assumption~(A4), to
\begin{equation}\label{eq:neglimitc}
\lim_{z \to - \infty } c(z) = 0.
\end{equation}
Imposing
\begin{equation}
\label{eq:mbar_frac}
\bar{m}'(z) = \frac{[K_1 \omega_1(\bar{m}) c-K_2 \omega_2(\bar{m}) ] - \frac{ \partial_m b(\bar{m},c) }{2 \partial^2_{mm} u (z,\bar{m})} - b(\bar{m},c) \bar{B}(\bar{m})}{A(\bar{m})c'(z) - v} >0
\end{equation}
we obtain the minimal wave speed $v^*$
\begin{equation}
\label{eq:minwavespeed}
v^*:= \sup_{z \in \text{supp} (P)} \sqrt{A(\bar{m}) \left[ c(z) (Q(P(z))+ \sigma P(z) K_1 \omega_1(\bar{m}(z))) -  \sigma P(z) K_2 \omega_2(\bar{m}(z)) \right] },
\end{equation}
since condition~\eqref{eq:ass:b} implies that the numerator in~\eqref{eq:mbar_frac} is negative.

Furthermore, differentiating \eqref{eq:R=0} with respect to $z$ gives
\begin{equation}
\label{eq:P'}
P' = \frac{1}{2\beta} (\partial_m b(\bar{m},c) \bar{m}' + \partial_c b(\bar{m},c) c').
\end{equation}
In case \textbf{(I)} in assumption (A3), this leads to the monotonically decreasing behaviour of $P$
\begin{equation*}
P'(z) = \frac{1}{2 \beta} \partial_m b(\bar{m}(z)) \bar{m}'(z)<0.
\end{equation*}
In the case where cells rely on consumption of the chemoattractant for proliferation, i.e.~case \textbf{(II)} in assumption (A3), the total cell density attains a maximum at $z^*$, where $z^*$ is such that \eqref{eq:P'} evaluated at $z^*$ equals zero.
From \eqref{eq:R=0}, \eqref{eq:mbar_limit0} and \eqref{eq:neglimitc}, we can deduce that at the back of the wave the total cell density will converge to
\begin{equation*}
\lim_{z \to - \infty } P(z) = \frac{1}{2 \beta} b(0,0).
\end{equation*}
The dominant trait $\bar{m}$ increases from the back to the leading edge of the wave. From \eqref{eq:R=0} and condition \eqref{eq:ass:b}  we find that the unique point $\zeta \in \mathbb{R}$ where $\bar{m} (\zeta )=1$ coincides   with the point where
\begin{equation*}
P(\zeta ) = \frac{1}{2 \beta} b(1,c_0) = 0  \quad \text{ or } \quad P(\zeta ) = \frac{1}{2 \beta} b(1) = 0,
\end{equation*}
in case  \textbf{(II)} or \textbf{(I)} respectively, 
which is the separation point between invaded and non-invaded region, i.e.~$P(z)=0$, $c(z) = c_0$ for all $z > \zeta$.
\end{proof} 

\section{Numerical results}
\label{sec:numres}

We consider system \eqref{eq:system} on a space-time-trait domain $(0,L) \times (0,T) \times (0,1)$, where $L>1$ and $T$ were chosen large enough to allow the evolution of the solution close to a travelling wave. We consider homogeneous Neumann boundary conditions in the spatial variable for both  $p_\epsilon$ and $c_\epsilon$. For readability we will drop the $\epsilon$ subscript in this section. For details of the numerical method we refer the reader to the Supplementary Information. 

At the start of the simulation, cells are taken to be concentrated on the left boundary of the domain, i.e.\ close to $x=0$, and to be uniformly distributed in the membrane occupancy $m$. Note that this means that, in addition to the initial free ligand concentration $c_0$, there is also initial membrane-bound ligand in the system. Initially, the free ligand concentration is set to a constant value over the whole domain. Summarising, we choose
\begin{equation}
\label{Ch2:eq:IC}
p(x,0,m) \equiv p(x,0) := \begin{cases} p_0 \quad \text{for} \quad x\leq 1
\\ 0 \quad \text{elsewhere}, \end{cases} \quad c(x,0) \equiv c_0 \text{ for } x \in [0,L]. 
\end{equation}
These initial conditions represent a scenario in which cells are exposed to a chemoattractant rich environment, e.g. describing \textit{in vitro} experiments in which cells or bacteria are placed on one side of a rectangular petri dish and an attractant is uniformly distributed. The parameter values used for the simulations are stated in the figure captions. 

We consider: 
\begin{itemize}
    \item Three different setting modes \textbf{(I/II/III)} for different couplings between ligand and cell dynamics, where (\textbf{I-II}) correspond to the assumptions in (A3);
    \item Two different setting modes \textbf{(A/B)} for distinguishing between different combinations of the trait-dependent coefficient functions;
    \item Three different setting modes \textbf{(1/2/3)} for the type of division kernel (cf. Figure~\ref{fig:schematic_division}).
\end{itemize} 
The simulation results are denoted by the three indicators, e.g. $\textbf{IA1}$, respectively. Note that not all of these scenarios fulfill the assumptions for the analytical result to hold.\\
\noindent
\textbf{Ligand-cell dynamic couplings (I/II/III):} 

\begin{itemize}
    \item[\textbf{(I)}] The ligand acts only as a  chemoattractant and is degraded by the cells.  Here, the rate of division does not depend on ligand availability, i.e. cells do not rely on the presence of ligands to proliferate. We choose a standard form for the division rate
    \begin{equation}
    \label{Ch2:res:b(m)}
        b(m) := b_0 (1-m).
    \end{equation}

    \item[\textbf{(II)}] The ligand acts as both a nutrient and a chemoattractant. Hence, ligands are now considered to be essential for the cells to proliferate. Specifically, we choose
    \begin{equation}
    \label{Ch2:res:b(m,c)}
        b(m,c) := b_0 (1-m) c.
    \end{equation} 
    \item[\textbf{(III)}] The ligand acts only as a chemoattractant, as in case \textbf{(I)}, but is not degraded by the cells (i.e. $Q(P) \equiv 0$). This case holds particular relevance, as the sole processes that modulate free ligand in the environment are attachment and detachment.
    \end{itemize}

\noindent
\textbf{Trait-dependent coefficient function settings (A/B):}
    \begin{itemize}
    \item[\textbf{(A)}] Chemotactic sensitivity and division rate are independent of the cell trait, i.e. $A(m) \equiv A$ and $b(m,c) \equiv b(c)$. In particular, we choose $A=1$ and $b(c) \equiv b = 0.75$ for \textbf{(I)} or $b(c)=0.75 c$ for \textbf{(II)}. In this case all cells have equal capacity to divide and migrate.
    \item[\textbf{(B)}] Trade-off between the chemotactic sensitivity and the rate of division. We choose $A(m) := 2m$ and $b(m)$ as stated in \eqref{Ch2:res:b(m)} and \eqref{Ch2:res:b(m,c)}. Here, cells with lower membrane occupancy are more prone to divide and less migratory, and vice versa for cells with a higher membrane occupancy. Biologically, this could reflect cells with a limited energy, which must balance dividing and migrating activities, see e.g. \cite{wang2019diverse, waugh2008interleukin}.
    \end{itemize}

    \noindent
    \textbf{Division settings (1/2/3):}
    \begin{itemize}
    \item[\textbf{(1)}] We consider symmetric division of type $m \to m/2 + m/2$ (bound ligand split equally between two daughter cells). In particular, we choose the kernel 
    \begin{equation}
    \label{Ch2:eq:divk_B1}
        B_1(m,y) := 
        \frac{V_1(m,y)}{\|V_1(m,\cdot)\|}, \; \; \text{ with } \;
      V_1(m,y) := \frac{1}{\sqrt{0.02 \pi }} \exp\Big(\!\! -\frac{(m/2-y)^2}{0.02}\Big),  
    \end{equation}
    where $\|v(m, \cdot)\| = \int_0^1 v(m,y) dy$   and the first variable, $m$, is the mother trait and the second variable, $y$, denotes the daughter trait.
    \item[\textbf{(2)}] We consider an asymmetric division of type $m \to m + 0$ (one daughter obtains all the ligand) through the kernel 
    \begin{equation}
    \label{Ch2:eq:divk_B2}
    B_2(m,y) := \frac{2}{3} 
     \frac{V^1_2(m,y)}{\| V_2^1(m,\cdot)\|}  + \frac{1}{3} 
      \frac{V^2_2(m,y)}{\| V_2^2(m,\cdot)\|} ,
    \end{equation} where
    \begin{equation*}
        V^1_2 (m,y) := \frac{1}{\sqrt{0.02 \pi }} \exp\Big(\!\! -\frac{(m-y)^2}{0.02}\Big) \; \text{ and } \;
        V^2_2 (m,y) := \frac{1}{\sqrt{0.02 \pi }} \exp\Big(\!\! -\frac{y^2}{0.02}\Big).
    \end{equation*}
    \item[\textbf{(3)}] Finally, we consider consumptive division of type $m \to 0 + 0 $ (all ligand consumed during division) through the kernel 
    \begin{equation} \label{Ch2:eq:divk_B3} 
    B_3(m,y): = 
     \frac{ V_3(m,y)}{\|V_3(m,\cdot)\|}, \;\; \text{ where } \;  \;
      V_3(m,y):= \frac{1}{\sqrt{0.02 \pi }} \exp \Big(\!\! -\frac{y^2}{0.02}\Big).  
    \end{equation}
    \end{itemize}
For schematics of the division processes see Figure~\ref{fig:schematic_division}. 

\textit{Remark.} We note that in the limit $\epsilon \to 0$, where the analysis applies, cell division and death become very fast and dominate detachment and attachment processes. Therefore, cells at the front of travelling waves, where they are exposed to higher ligand concentrations, will continue to attach ligands. However, cell division is fast and a mother cell with fully occupied membrane will divide into two daughter cells of lower membrane occupancy. Simulations have confirmed that when $\epsilon <<1$ is small enough, the mean trait will settle at a lower value throughout the travelling wave. We observed more diverse dynamical behaviours in the trait-space for parameter regimes with larger $\epsilon$ (where $\epsilon < 1$), and numerical simulations explore cases that extend beyond the scope of the analytical results. While analytical results extend to trait-dependent diffusion coefficients $D(m)$, in numerical simulations we assume $D(m)$ to be constant.

We will explore some biologically interesting questions:
\begin{itemize}
    \item In the case of equal abilities to proliferate and divide (case \textbf{A}), how does the cell population structure when exposed to the ligand concentration?
    \item When there is a trade-off between proliferation and migration (case \textbf{B}), does the cell population structure differently?
\end{itemize}

\subsection{Invading front in trade-off \textbf{(IB1)} vs no trade-off \textbf{(IA1)} scenarios}
\label{Ch2:appx:I}
 
In this case, cells do not rely on the presence of ligands to proliferate. Hence, we expect the cell population to equilibrate at the back of the travelling wave, despite ligand density vanishing in that region. We compare the two settings \textbf{A} and \textbf{B}. In the former, the chemoattractant sensitivity and division rate are homogeneous across the cell population. Hence, trait-structuring can only arise due to attachment/detachment of ligands and division. In case \textbf{B}, cells with higher membrane occupancy have the advantage of following the ligand gradient more efficiently; whereas cells with lower occupancy divide faster. Hence, we expect some trait-dependent sorting of the cells due to the trade-off. 

In both cases, see Figures~\ref{fig:IA1}-\ref{fig:IB1}, we indeed observe an invading front with positive travelling wave speed. As expected, the cells at the back of the wave continue to divide despite the lack of ligand. The total cell density decreases over the observed domain, as predicted. The ligand density increases from invaded to non-invaded region, due to its degradation by cells. In Figures~\ref{fig:IA1density} and ~\ref{fig:IB1density}, contour lines of the scaled cell population density $p(t,x,m)/P(t,x)$ are plotted in the $(x,m)$-domain to illustrate the distribution over the traits of the population. The dominant trait is increasing from the back to the front of the wave, in the case with the trade-off more than without. A reason for this might be the spatial sorting of different cell types in case \textbf{(IB1)}: more migratory cells accumulate at the front of the wave, while less migratory, but faster dividing cells fall behind and dominate the back of the wave. Hence, there is a segregated structuring of traits in the trade-off scenario.
The measured wave speed is compared to the analytically obtained minimal wave speed $v^*$ \eqref{eq:wavespeed} in Table~\ref{tab:wavespeed}. Note that the measured wave speed is higher than the evaluated minimal wave speed. This could be due to the evaluated speed applying in the limit $\epsilon \to 0$, whereas numerical simulations require an $\epsilon$ which is still relatively large (i.e. $\epsilon = 0.1$ for the simulation results in Figure~\ref{fig:IIAB1}).

\subsection{Travelling pulse with trade-off \textbf{(IIB1)} vs no trade off \textbf{(IIA1)}}
\label{Ch2:appx:II}
In contrast to the previous scenario, here cells rely on the presence of free ligand to proliferate. As the ligand concentration decreases, we expect the total cell density to decrease accordingly. As predicted by our analytical findings, the travelling wave solutions for the ligand are monotonically increasing. 
A significant difference to the cases presented before lies in the shape of the total cell density, which takes the shape of a travelling pulse. Specifically, it decreases at the back of the wave due to the lack of nutrient, has a maximum close to the front, before sharply decreasing at the border with the non-invaded region. As before, the dominant trait increases from the back to the front of the wave, see Figures~\ref{fig:IIA1density}-\ref{fig:IIB1density}. 
In the case \textbf{IIB1} (Figure~\ref{fig:IIB1}), where a trade-off occurs, the maximum of $P$ is higher than in the case \textbf{IIA1} with coefficients independent of $m$ (Figure~\ref{fig:IIA1}).

\subsection{Invading front without ligand decay and different division kernels \textbf{(IIIB 1-3)}}
\label{Ch2:appx:III}
A particularly relevant scenario not covered by the analytical results arises when the ligand does not decay. Unlike in previous cases, the ligand gradient in this setting is generated solely through the processes of ligand attachment and detachment. This fundamental change introduces new dynamics: when $Q(P_\epsilon) = 0$, and in the limit $\epsilon \to 0$, the monotonicity of the travelling wave solution for the ligand component $c$ is no longer guaranteed. 

In addition to this novel ligand setting, we also introduce two alternative cell division strategies that deviate from the assumptions in Section~\ref{sec:TW}.

Our numerical results reveal travelling wave solutions in scenarios \textbf{(IIIB1)}, \textbf{(IIIB2)} and \textbf{(IIIB3)}, as shown in Figures~\ref{fig:IIIB1}-\ref{fig:IIIB2} and \ref{fig:IIIB3_SI}. In these cases, the total cell density stabilizes at a constant value at the back of the wave. This outcome aligns with expectations, as the growth dynamics are independent of nutrient availability, resulting in equilibrium where growth and decay terms balance. Toward the front of the wave, the total cell density decreases, while the dominant trait appears to increase from the back to the front of the wave. Altogether, the behaviour of the system in these scenarios bears a resemblance to the dynamics observed in Case \textbf{(I)}. This suggests that the qualitative properties predicted by the analytical framework extend to cases outside its strict assumptions, thereby demonstrating broader applicability.\\ The choice of division strategy significantly influences the shape of the travelling wave. For instance, using division kernel 2 defined via \eqref{Ch2:eq:divk_B2} (from $m$ to $m + 0$) results in a greater diversity of cell membrane occupancies at the front. This, in turn, creates a more heterogeneous composition amongst the front cells, each exhibiting distinct migration speeds due to the underlying assumptions of the model.

\begin{table}[h]
    \centering
    \begin{tabular}{c|c|c}
           Cases & Minimal wave speed $v^*$& Measured wave speed \\ \hline \hline
        \textbf{IB1} & $1.13$ & $1.44 \pm 0.02$ \\
        \textbf{IIB1} & $0.73$ & $3.13 \pm 0.09$\\
        
\end{tabular}
    \caption{Minimal wave speed $v^*$ \eqref{eq:wavespeed} for cases that satisfy the assumptions made for the formal analysis; Mean and standard deviation of measured wave speed}
    \label{tab:wavespeed}
\end{table}

\begin{figure}
\centering
\begin{subfigure}[b]{0.32\textwidth}
\centering
\includegraphics[width=\textwidth]{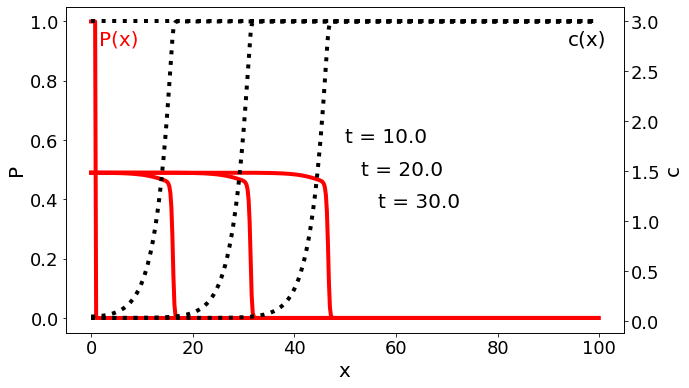}
\caption{\textbf{IA1}} \label{fig:IA1}
\end{subfigure}
\hfill
\begin{subfigure}[b]{0.32\textwidth}
\centering
\includegraphics[width=\textwidth]{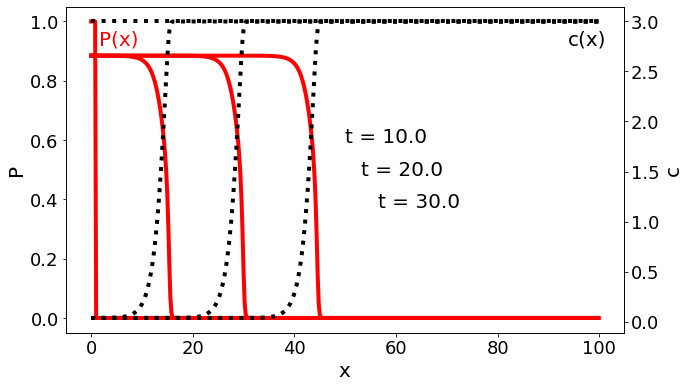}
\caption{\textbf{IB1}}
\label{fig:IB1}
\end{subfigure}
\hfill
\begin{subfigure}[b]{0.32\textwidth}
\centering
\includegraphics[width=\textwidth]{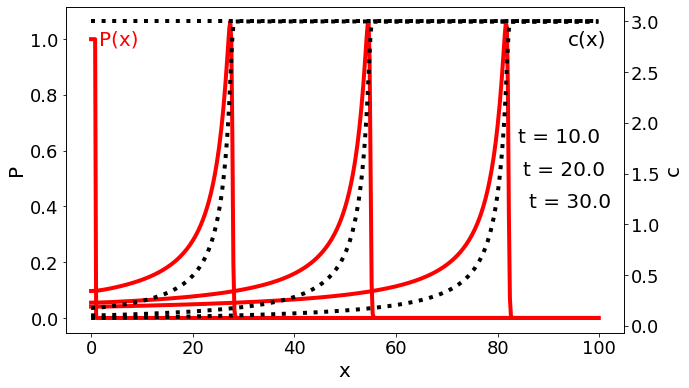}
\caption{\textbf{IIA1}}
\label{fig:IIA1}
\end{subfigure}
\hfill
\begin{subfigure}[b]{0.32\textwidth}
         \centering
         \includegraphics[width=\textwidth]{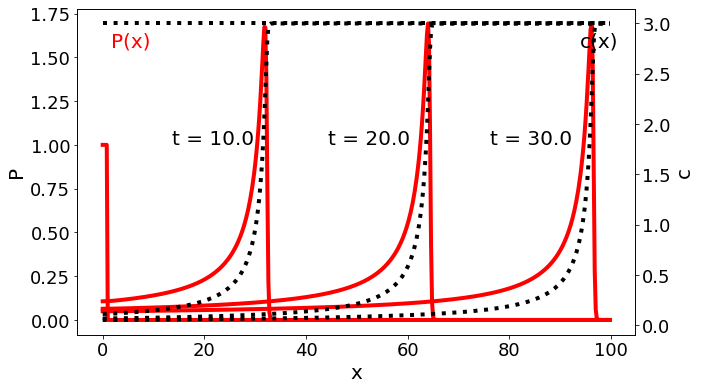}
         \caption{\textbf{IIB1}}
         \label{fig:IIB1}
         
     \end{subfigure}
\hfill
\begin{subfigure}[b]{0.32\textwidth}
         \centering
         \includegraphics[width=\textwidth]{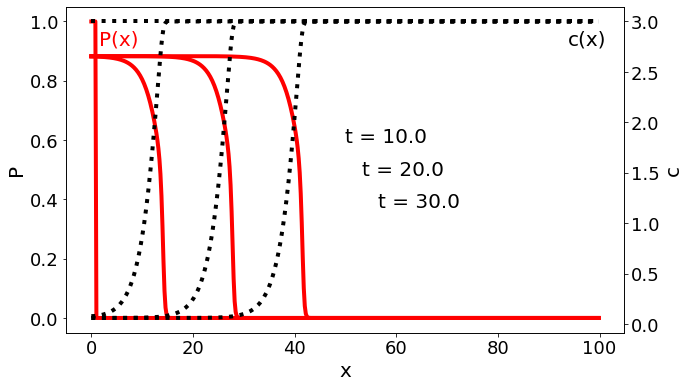}
         \caption{\textbf{IIIB1}}
         \label{fig:IIIB1}
     \end{subfigure}
\hfill
\begin{subfigure}[b]{0.32\textwidth}
         \centering
         \includegraphics[width=\textwidth]{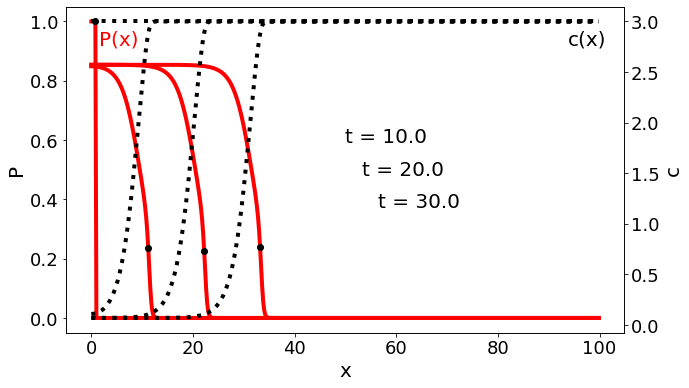}
         \caption{\textbf{IIIB2}}
         \label{fig:IIIB2}
     \end{subfigure}
\hfill 
\begin{subfigure}[b]{0.32\textwidth}
         \centering
         \includegraphics[width=\textwidth]{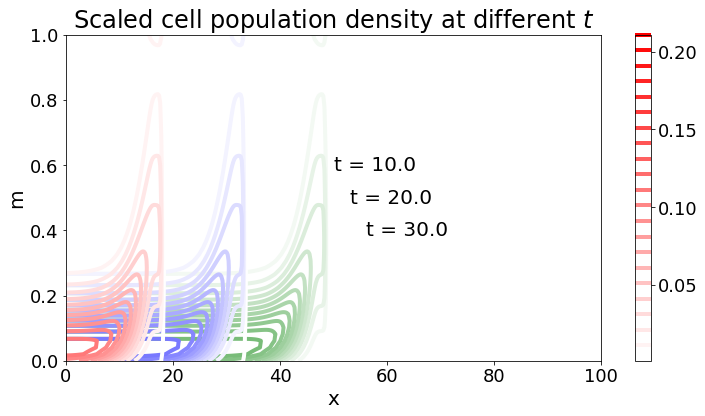}
         \caption{\textbf{IA1}}
         \label{fig:IA1density}
        
     \end{subfigure}
\hfill
\begin{subfigure}[b]{0.32\textwidth}
         \centering
         \includegraphics[width=\textwidth]{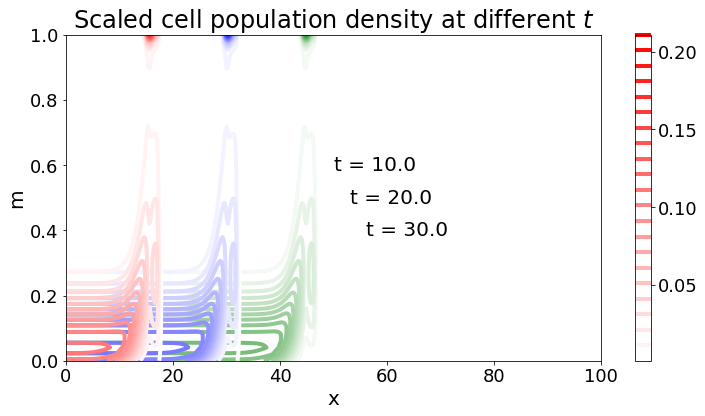}
         \caption{\textbf{IB1}}
         \label{fig:IB1density}
        
     \end{subfigure}
\hfill
\begin{subfigure}[b]{0.32\textwidth}
         \centering
         \includegraphics[width=\textwidth]{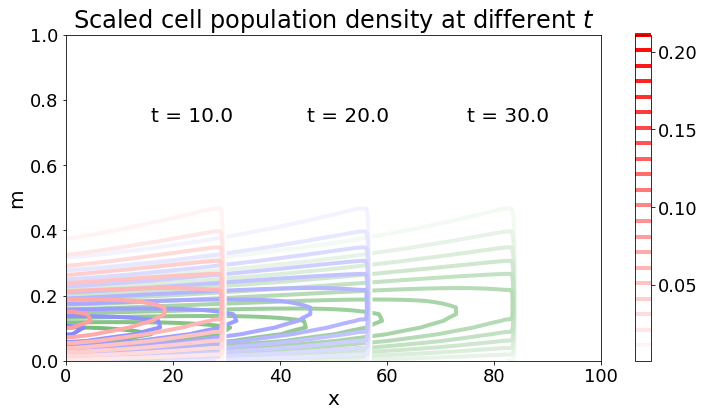}
         \caption{\textbf{IIA1}}
         \label{fig:IIA1density}
        
     \end{subfigure}
\hfill
\begin{subfigure}[b]{0.32\textwidth}
         \centering
         \includegraphics[width=\textwidth]{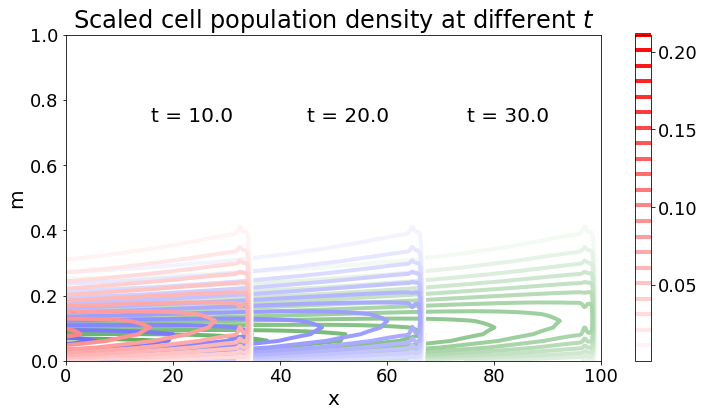}
         \caption{\textbf{IIB1}}
         \label{fig:IIB1density}
        
     \end{subfigure}
\hfill
\begin{subfigure}[b]{0.32\textwidth}
         \centering
         \includegraphics[width=\textwidth]{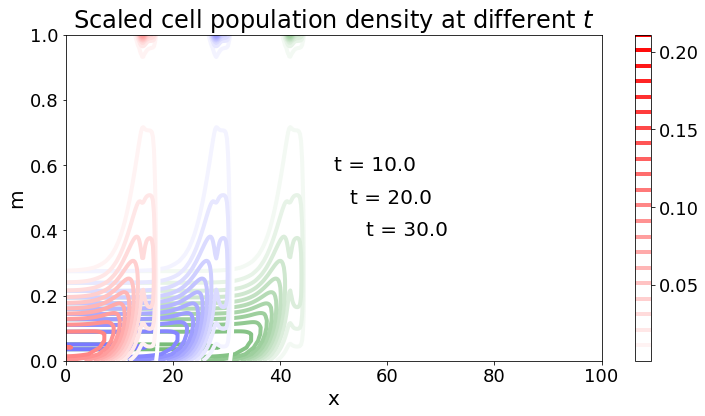}
         \caption{\textbf{IIIB1}}
         \label{fig:IIIB1density}
        
     \end{subfigure}
\hfill
\begin{subfigure}[b]{0.32\textwidth}
         \centering
         \includegraphics[width=\textwidth]{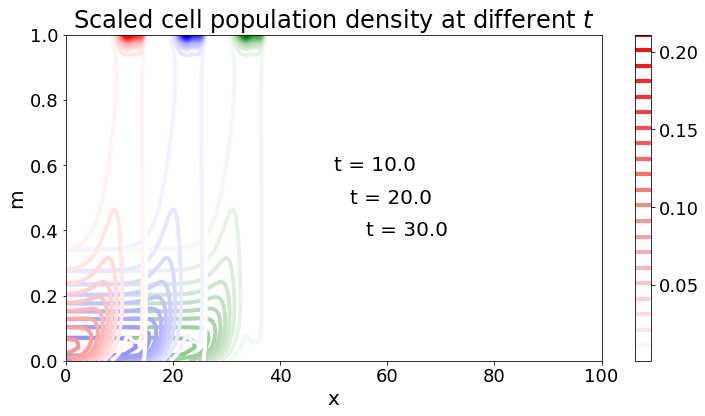}
         \caption{\textbf{IIIB2}}
         \label{fig:IIIB2density}
\end{subfigure}

\caption{(a-f): Total cell density (red lines) and concentration of chemoattractant (black dotted line) in space, at different time points for different scenarios; (g-l): Scaled cell population density $p(x,t,m)/P(x,t)$ at different time points in the $(x,m)$ domain. Parameters and functions used: $D(m) \equiv 1$, $b_0=1.5$, $\beta=0.75$, $K_1 = 0.75$, $K_2 = 0.1$, $d=1$, $q=0.5$, $\sigma =1$, $\omega_1(m) := \frac{1-m}{0.1+1-m}$, $\omega_2(m) := \frac{m}{0.1+m}$, $\epsilon = 0.1$, $B(m,y)=B_1(m,y)$ with $B_1$ defined via \eqref{Ch2:eq:divk_B1}; (a,c,g,i) $A(m)\equiv 2$, $b(m) := b_0c$; (b,d,e,f,h,j,k,l) $A(m):=2m$, $b(m) :=b_0(1-m)c$. }
        \label{fig:IIAB1}
\end{figure}

\section{Discussion} \label{sec:discussion}
We have formally derived a Keller-Segel type model that integrates a continuous trait structure in a heterogeneous cell population with the spatio-temporal evolution of a chemoattractant. This model was constructed at the individual scale and subsequently coarse-grained to derive a system of non-local partial differential equations. The continuous trait variable captures the proportion of receptors occupied by ligands on the cell membrane, which directly influences growth, diffusion, and chemotactic sensitivity. Changes in this trait arise from ligand attachment or detachment from membrane receptors, while a growth-fragmentation term models the inheritance of receptor occupancy during cell division. Additionally, the ligand dynamics are described by a diffusion-reaction equation that accounts for ligand removal and release during the attachment-detachment processes. \\
Using a Hopf-Cole transformation, we derived properties of travelling wave solutions under the assumption of time-scale separation, with slow diffusion and fast cell division/ death processes. These analytical results provide insights into the minimal wave speed and qualitative behaviour of the solutions, highlighting the influence of trait dynamics on popupation migration. The formal analysis underscored the relevance of trait-mediated processes in shaping wave propagation.\\
Our numerical simulations corroborate the analytical findings in scenarios that align with the assumptions of the formal calculations. Specifically, the numerical results qualitatively match the predicted properties of travelling wave solutions, with the measured wave speed consistently exceeding the calculated minimal wave speed. This agreement validates the theoretical framework and emphasises its applicability in capturing key dynamics of the system.\\
Beyond the cases covered by the assumptions underlying the formal analysis, the numerical experiments explore parameter regimes that deviate from the time-scale separation hypothesis. These simulations reveal richer dynamical behaviour in the trait-space and underscore the potential for more complex interactinos between diffusion, growth-fragmentation, and chemotactic sensitivity. The numerical results thus extend the scope of the study and provide a foundation for future investigations into parameter-dependent dynamics.\\
The model presented is framed in an abstract context to highlight general properties of migratory cell populations that modulate their behaviour based on ligand availability and receptor occupancy. This framework has broad applicability, including neural crest migration, where cells are often classified as ``leader'' or ``follower'' types. Previous agent-based models, such as those in \cite{schumacher2019neural}, have explored continuous cell states. By incorporating a continuum of cell states, our model offers a complementary perspective that can enhance the understanding of such systems. For instance, it provides a continuous representation of cell density along the trait-dimension, effectively bridging the gap between discrete-in-trait and continuum modelling approaches and offering a more nuanced understanding of cell population dynamics.\\
Our approach could also extend to cancer metastasis models, particularly those exploring the ``go-or-grow'' hypothesis, which differentiates between migratory and proliferative cell types \cite{giese1996dichotomy}. By introducing a continuous trait structure, this framework could generalise the dichotomous models into a continuous description, facilitating a more nuanced analysis of the trade-offs between migration and proliferation. \\
While existing models have considered trait mutations and selection, our model introduces an advection term in the trait space, with a coefficient modulated by the availability of an external component. This innovation builds on techniques from previous studies \cite{bouin2012invasion, lorenzi2022trade, macfarlane2022individual}, enabling the derivation of travelling wave properties under time-scale separation assumptions. Furthermore, our model aligns with the broader class of structured diffusion-advection-reaction equations, which have been extensively studied in age- and size-structured populations \cite{auger2008structured}.\\
A key assumption in our analysis is the separation of time scales, which simplifies the derivation of travelling wave solutions. Future work could relax this assumption to investiage whether travelling wave solutions persist under alternative parameter settings. Moreover, extending the model to incorporate additional biological complexities, such as stochastic effects or spatial heterogeneity, could provide deeper insights into the dynamics of trait-mediated population migration. These extensions would further bridge the gap between theoretical predictions and experimental observations, enhancing the utility of the model in biological applications.

\section*{Acknowledgments}
VF was supported by the EPSRC Centre for Doctoral Training in Mathematical Modelling, Analysis and Computation (MAC-MIGS) funded by the UK Engineering and Physical Sciences Research Council (grant EP/S023291/1), Heriot-Watt University and the University of Edinburgh. KJP acknowledges ``Miur-Dipartimento di Eccellenza’’ funding to the Dipartimento di Scienze, Progetto e Politiche del Territorio (DIST). TL gratefully acknowledges support from the Italian Ministry of University and Research (MUR) through the grant PRIN 2020 project (No. 2020JLWP23) ``Integrated Mathematical Approaches to Socio-Epidemiological Dynamics’’ (CUP: E15F21005420006) and the grant PRIN2022-PNRR project (No. P2022Z7ZAJ) ``A Unitary Mathematical Framework for Modelling Muscular Dystrophies’’ (CUP: E53D23018070001) funded by the European Union – NextGenerationEU. KJP and TL are members of INdAM-GNFM. The authors would like to thank the Isaac Newton Institute for Mathematical Sciences, Cambridge, for support and hospitality during the programme Mathematics of Movement where work on this paper was undertaken.



\appendix

\section{Numerical implementation}
\label{app:num}
For the numerical implementation we used a standard finite volume scheme to discretize the space and membrane occupancy dimensions, see for example \cite{hundsdorfer2003numerical}. The resulting system of ODEs for the time evolution was then solved by using the pre-implemented explicit Runge-Kutta method of order 5(4) from the python package \textit{scipy.integrate.solve\_ivp}.

The space and membrane occupancy intervals ($[0,L]$ and $[0,1]$, respectively) were discretised in the following way: let $N+1$ and $M+1$ denote the number of discretisation points (including both endpoints) of the space and occupancy interval, respectively. Then the full intervals are divided into $N $ and $M$ intervals of length $\Delta x = \frac{L}{N}$ and $\Delta m = \frac{1}{M}$, respectively.  We use $L=100$, $T=30$, for the simulation window $(x,m,t)\in [0,L]\times[0,1]\times[0,T]$, $N+1 = 500$ discretisation points in space and $M+1 = 50$ in the dimension for membrane occupancy, resulting in $\Delta x = 0.2$ and $\Delta m = 0.02$. 
Let $p_{i,j}(t) := p(x_i, t, m_j) = p(i \Delta x, t, j \Delta m )$ denote the cell population density at position $x_i$ with membrane occupancy $m_j$, and $c_i (t):= c(x_i, t) = c(i \Delta x, t)$ the chemoattractive concentration at position $x_i$. The total cell density is defined as $P_i(t) := \Delta m \sum_{j=0}^{M-1} p_{i,j}(t)$.

For $0 \leq i < N$, let $f^p_{i+1/2, j}(t)$ denote the flux of the population with membrane state $m_j $ from the $x_i$ to $x_{i+1}$ at time $t$,
\begin{equation}
\label{AppxTW:eq:FluxPx}
    f_{i+1/2,j}^p (t) = \begin{cases}  - \epsilon  \frac{1}{\Delta x } D^j (p_{i+1,j} (t) - p_{i,j} (t) ) + \frac{1}{\Delta x } A^j (c_{i+1} (t) - c_i(t) ) p_{i,j} (t)\\
    \quad \text{if} \quad c_{i+1}(t) - c_i(t) >0, \\
   - \epsilon \frac{1}{\Delta x } D^j (p_{i+1,j}(t) - p_{i,j}(t)) + \frac{1}{\Delta x } A^j (c_{i+1}(t) - c_i (t) ) p_{i+1,j} (t) \\
   \quad \text{if} \quad c_{i+1}(t) - c_i (t) < 0,
\end{cases}
\end{equation}
for $j=1,\dots,M$, where $D^j = D(m_j)$ and $A^j = A(m_j)$. To enforce homogeneous Neumann boundary conditions we define $f^p_{-1/2,j} (t) = 0 $ and $f^p_{N+1/2,j} (t)= 0 $ for $0 \leq t \leq T$, $j=1,\dots,M$.
Similarly, for $0 \leq j < M$ the flux of the population located at $x_i = i\Delta x$ from membrane state $m_j$ into $m_{j+1}$ is
\begin{equation}
\label{AppxTW:eq:FluxPm}
    f^p_{i,j+1/2} (t) = K_1 \omega_1(m_j) c_i p_{i,j} - K_2 \omega_2(m_{j+1}) p_{i,j+1},
\end{equation}
for $i=1,\dots,N$, where $\omega_1, \omega_2 \in C(0,1)$ with $\omega_1 (1) = 0$ and $\omega_2 (0) = 0$. This, together with $p_{i,j}(0) = 0 $ for $j<0$ and $j>M$ ensures $f_{i,-1/2}(t) = 0 $ and $f_{i,M+1/2} (t) = 0 $, for $i=1,\dots,N$, so that the scheme is mass conserving. The flux for the chemoattractive substance from $x_i$ to $x_{i+1}$ at time $t$ is
\begin{equation}
\label{AppxTW:eq:FluxC}
f^c_{i+1/2} = - \epsilon d \frac{1}{\Delta x } (c_{i+1}(t)- c_i(t) ), 
\end{equation}
for $i=1,\dots,N$, with boundary conditions $f^c_{-1/2} (t) = 0 $ and $f^c_{N+1/2} (t) = 0$ for $0 \leq t \leq T$.
The kinetics with the non-local terms are discretised as follows:
\begin{equation}
\begin{split}
    k^p_{i,j} (t) &= \frac{1}{\epsilon } \Big( - \gamma (P_i(t)) p_{i,j}(t) + \Delta m \sum_{z=0}^{M} b(m_z,c_i(t)) B(m_z,m_j) p_{i,z} (t) \\
    &\quad \qquad  - \frac{1}{2} p_{i,j} (t) b(m_j,c_i(t)) \Delta m \sum_{z=0}^{M} B(m_j,m_z) \Big), \\
    k^c_i (t) & = \sigma K_2 \Delta m \sum_{z=1}^{M} p_{i,z} (t)  - \sigma K_1 c_i (t) \Delta m \sum_{z=0}^{M-1} p_{i,z} (t) - Q(P_i(t)) c_i(t).
\end{split}
\end{equation}

The corresponding ODE system then reads
\begin{equation}
\begin{split}
\frac{d}{dt} p_{i,j}(t) &= \frac{1}{\Delta x } (f^p_{i-1/2,j}(t) - f^p_{i+1/2,j}(t)) + \frac{1}{\Delta m } (f^p_{i,j-1/2} (t) - f^p_{i,j+1/2} (t))  + k^p_{i,j} (t), \\
\frac{d}{dt} c_i(t) &= \frac{1}{\Delta x } (f^c_{i-1/2}(t) - f^c_{i+1/2}(t)) + k^c_i(t),
\end{split}
\end{equation}
for $0 \leq i \leq N$, $0 \leq j \leq M$ and with appropriate initial conditions $p_{i,j}(0) \geq 0 $, $c_i(0) \geq 0$.

\section{Supplementary figures}

\begin{figure}[h!]

\centering
    \begin{subfigure}[b]{0.32\textwidth}
         \centering
         \includegraphics[width=\textwidth]{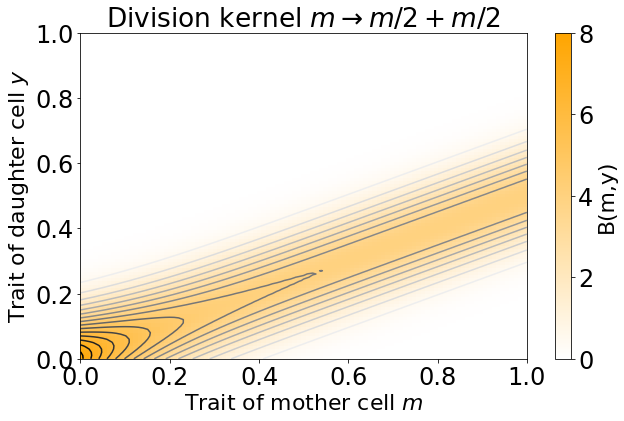}
         \caption{$B_1$}
         \label{fig:B1}
     \end{subfigure}
     \hfill
     \begin{subfigure}[b]{0.32\textwidth}
         \centering
\includegraphics[width=\textwidth]{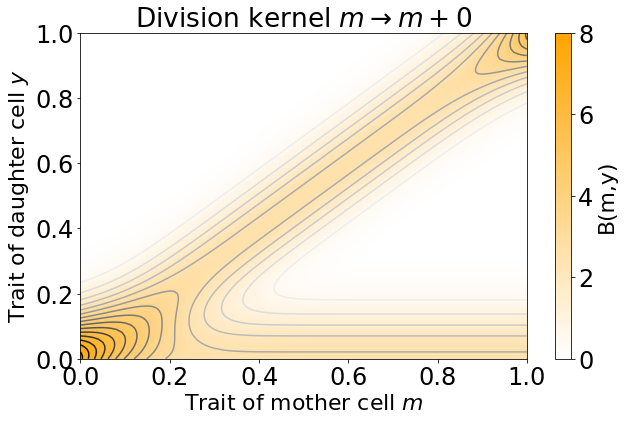}
         \caption{$B_2$}
         \label{fig:B2}
     \end{subfigure}
     \hfill
     \begin{subfigure}[b]{0.32\textwidth}
         \centering
         \includegraphics[width=\textwidth]{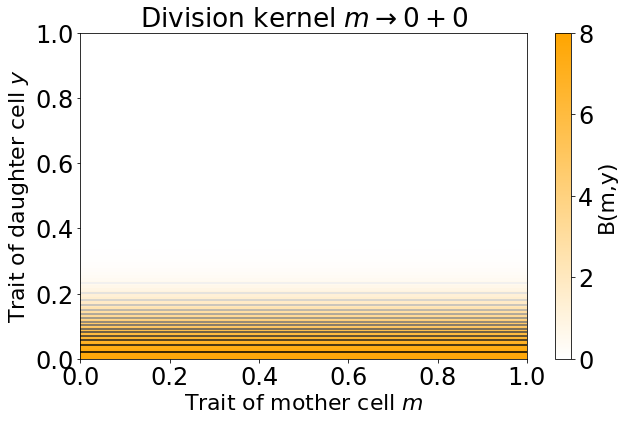}
         \caption{$B_3$}
         \label{fig:B3}
     \end{subfigure}
\caption{  Division kernels for: (a) Symmetric division (cf.\ definition (5.4) for the division kernel $B_1$) - the bound ligands are equally distributed between two daughter cells; (b) Asymmetric division (cf.\ definition (5.5) for the division kernel $B_2$) - one daughter cells inherits all the bound ligands, the other daughter cells has an empty membrane; (c) Consumptive division (cf.\ definition (5.6) for the division kernel $B_3$) - each cell divides into two cells with trait $0$ and bound ligands are consumed during division}
\label{fig:schematic_division2}
\end{figure}

\begin{figure}[h!]
\centering
\begin{subfigure}[b]{0.45\textwidth}
\centering
\includegraphics[width=\textwidth]{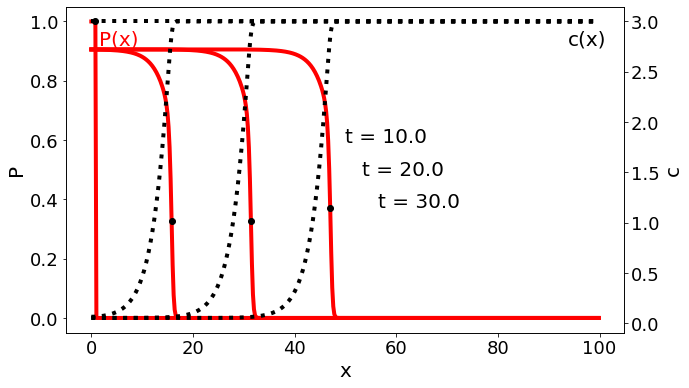}
\caption{\textbf{IIIB3}} \label{fig:IIIB3}
\end{subfigure}
\hfill
\begin{subfigure}[b]{0.45\textwidth}
\centering
\includegraphics[width=\textwidth]{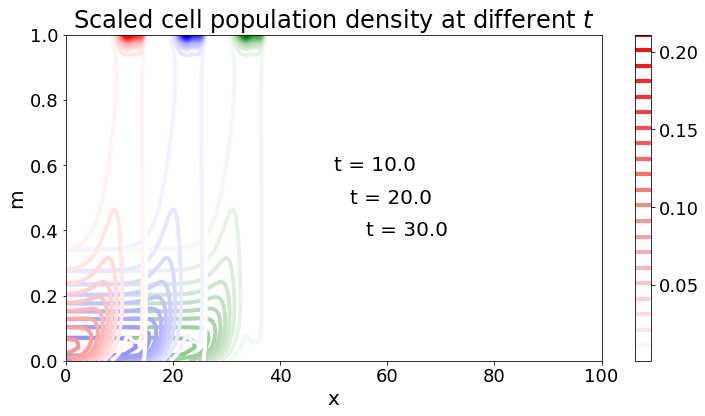}
\caption{\textbf{IIIB3}}
\label{fig:IIIB3_mx}
\end{subfigure}

\caption{(a): Total cell density (red lines) and concentration of chemoattractant (black dotted line) in space, at different time points for different scenarios; (b): Scaled cell population density $p(x,t,m)/P(x,t)$ at different time points in the $(x,m)$ domain; Parameters and functions used: $D(m) \equiv 1$, $b_0=1.5$, $\beta=0.75$, $K_1 = 0.75$, $K_2 = 0.1$, $d=1$, $q=0.5$, $\sigma =1$, $\omega_1(m) := \frac{1-m}{0.1+1-m}$, $\omega_2(m) := \frac{m}{0.1+m}$, $\epsilon = 0.1$, $B(m,y):=B_1(m,y)$, with $B_1$ defined via (5.4), $A(m):=2m$, $b(m) :=b_0(1-m)c$. }
        \label{fig:IIIB3_SI}
\end{figure}

\end{document}